\documentclass[lettersize,journal]{IEEEtran}
\usepackage{amsmath,amsfonts}
\usepackage{algorithmic}
\usepackage{algorithm}
\usepackage{array}
\usepackage[caption=false,font=normalsize,labelfont=sf,textfont=sf]{subfig}
\usepackage{textcomp}
\usepackage{stfloats}
\usepackage{url}
\usepackage{verbatim}
\usepackage{graphicx}
\usepackage[backend=bibtex,style=ieee]{biblatex}
\usepackage{amssymb}
\usepackage{xcolor}
\usepackage{bm}
\usepackage{latexsym}
\usepackage{multirow}
\usepackage{threeparttable}
\usepackage{amsthm}

\newtheorem{theorem}{Theorem}

\newtheorem{assumption}{Assumption}
\newtheorem{definition}{Definition}
\newtheorem{corollary}{Corollary}
\begin{document}

\title{Computationally Efficient Sparse Signal Recovery\\ via Linear Sketching and Deep Unfolding}

\author{Tatsuki Tokumura,~\IEEEmembership{Non-Member,} 
        \thanks{T. Tokumura, A. Nakai-Kasai, and T. Wadayama are with Nagoya Institute of Technology, Gokiso, Nagoya, Aichi 466-8555, Japan,}
        Ayano Nakai-Kasai,~\IEEEmembership{Member,~IEEE,} 
      and Tadashi Wadayama,~\IEEEmembership{Member,~IEEE} 
    %   \thanks{T. Wadayama is with Nagoya Institute of Technology, Gokiso, Nagoya, Aichi 466-8555, Japan,}
        % <-this % stops a space
\thanks{Part of this research was presented at the 2025 Asia Pacific Signal and Information Processing Association Annual Summit and Conference (APSIPA ASC) \cite{apsipatokumura}.}% <-this % stops a space
% \thanks{Manuscript received April 19, 2021; revised August 16, 2021.}
}

% The paper headers
\markboth{Journal of \LaTeX\ Class Files,~Vol.~14, No.~8, August~2021}%
{Shell \MakeLowercase{\textit{et al.}}: A Sample Article Using IEEEtran.cls for IEEE Journals}

% \IEEEpubid{0000--0000/00\$00.00~\copyright~2021 IEEE}
% Remember, if you use this you must call \IEEEpubidadjcol in the second
% column for its text to clear the IEEEpubid mark.

\maketitle

\begin{abstract}
  This paper provides a sparse signal recovery algorithm, 
  % In this paper, we propose a novel sparse signal recovery algorithm, 
  DU-PSISTA (Deep Unfolded-Periodic Sketched Iterative Shrinkage-Thresholding Algorithm), 
  which aims to balance computational efficiency
  and accuracy for recovering high-dimensional sparse signals, 
  and a convergence analysis under sufficient conditions. 
  DU-PSISTA introduces a random matrix projection known as sketching to reduce the dimensionality of gradient computations 
  and periodically 
  alternates between the standard ISTA and the sketched variant.
  % that leverages random projections to
  % reduce the dimensionality of gradient computations. 
  This hybrid structure enables flexible 
  control over the trade-off between accuracy and computational complexity through a pre-configurable period parameter. 
  The algorithm includes many parameters to be tuned such as step sizes and thresholding factors 
  so that we incorporate deep unfolding 
  % into the algorithm that embeds 
  % learnable parameters such as step sizes and thresholding factors into each iteration. 
  % These parameters are optimized through data-driven training
  that optimizes the parameters through data-driven training, enabling the algorithm to 
  adaptively improve convergence speed and performance. 
  We show that the proposed method achieves a linear-type contraction to a neighborhood of the true sparse signal 
  with properly selected parameters.
  The analysis provides an interpretation for the effectiveness of the hybrid structure to improve recovery accuracy.
  Numerical experiments confirm that 
  our method achieves comparable recovery performance to conventional deep unfolded ISTA
  while reducing computational complexity, especially when the period parameter and sketch 
  size are properly selected.
  The results are also consistent with the theoretical insights.
\end{abstract}

\begin{IEEEkeywords}
  Compressed sensing, sketching, iterative shrinkage-thresholding algorithm, deep unfolding.
\end{IEEEkeywords}

\section{Introduction}

\IEEEPARstart{D}{emand} for high-dimensional data reconstruction has been increasing 
with the recent advances in sensing technologies.
Reperesentative examples are medical image reconstruction with high temporal resolution 
and signal reconstruction acquired by massive number of IoT (Internet of Things) devices~\cite{woodruff2014sketching}.
A processing technique utilizing the sparsity of signal known as compressed sensing \cite{donoho2006compressed} has capability of efficient and accurate recovery of such high-dimensional data.
% \IEEEPARstart{R}{ecent} advances in sensing technologies, wireless communication systems, medical imaging, and IoT (Internet of Things)
% devices have led to a dramatic increase in both the dimensionality and volume of data acquired from the real world. For
% instance, fMRI (functional Magnetic Resonance Imaging) for brain activity mapping typically requires the acquisition of
% tens of thousands of voxel values at high temporal resolution, resulting in massive, high-dimensional datasets. Similarly,
% in applications such as wireless communications, radar systems, and image processing, acquired signals are frequently 
% high-dimensional and often include noise or missing entries. These challenges necessitate the development of signal
% processing methods capable of efficient and accurate recovery of such high-dimensional data~\cite{woodruff2014sketching}.
% In recent years, a sparse signal processing technique known as compressed sensing~\cite{eldar2012compressed,assyuku} has attracted
% considerable attention. 
The sparsity of signals appears in various forms 
such as in medical imaging, audio processing, communications, and robotics, 
and is utilized in a wide range of fields \cite{eldar2012compressed,assyuku}.

One example of the demand for high-dimensional sparse signal reconstruction is found in next-generation wireless networks. 
With the emergence of 
6G (6th Generation Mobile Communication System) and IoT technologies, the number of connected 
devices is growing exponentially, leading to a corresponding increase in the volume of 
signals communicated between base stations and devices \cite{prasad2025toward}. 
High accuracy and real-time capability must be achieved simultaneously 
for fundamental tasks in wireless communications such as channel estimation and active user detection 
to meet the demands for ultra-high-speed, high-capacity communication in 6G. 
In uplink scenarios, 
grant-free massive access without any scheduling in advance enables avoiding complex protocols for device scheduling, 
and the activity patterns of devices are temporally and spatially sparse \cite{liu2018massive,gao2024compressive}.
% Especially, in grant-free
% uplink scenarios~\cite{liu2018massive,gao2024compressive}, 
% where base station scheduling is eliminated, 
% the activity patterns of devices exhibit temporal and spatial sparsity. 
The problem of active user detection can be formulated as a sparse signal recovery problem
to be solved in real-time speed. 

ISTA (Iterative Shrinkage-Thresholding Algorithm)~\cite{charnbolle1998nonlinear,daubechies2004iterative} is a widely adopted iterative optimization
method for solving such a problem due to its simplicity and theoretical convergence guarantees. 
ISTA is known to have slow convergence speed with sublinear rate \cite{beck2009fast} 
but it can be incorporated into deep unfolded frameworks to improve performance.
Deep unfolding \cite{deep,deepmodel} is one of the hyperparameter optimization methods for iterative algorithms,
where the algorithm iterations are unfolded to form a feedforward network 
and the hyperparameters included in the algorithm are optimized by data-driven training~\cite{balatsoukas2019deep}.
Unlike deep neural networks, deep unfolding models maintain interpretability and 
generally require fewer training samples.
This idea was first introduced by Gregor and LeCun \cite{deep} as LISTA (Learned ISTA) to improve the convergence speed and performance of ISTA.
It has been reported the success of deep unfolding in applications such as MIMO (Multiple-Input Multiple-Output) signal detection and sparse signal estimation ~\cite{song2021soft,deepISTA,ito2019trainable,takabe2019trainable}, 
or in other algorithms beyond ISTA \cite{deng2014tutorial,yang2016deep,kishida2020deep}.

ISTA and its deep unfolded variants, however, include 
matrix-vector product in each algorithm iteration, 
which dominates the computational complexity of the total algorithm execution time \cite{ito2019trainable}.
% This results in high computational cost when dealing with large-scale data matrices. 
In large-scale systems such as massive MIMO or high-resolution imaging, 
the matrix-vector product complexity per iteration becomes a critical bottleneck for meeting strict real-time and low-latency requirements 
and is required to be reduced as much as possible \cite{balatsoukas2019deep,shlezinger2023model}.
To address this limitation, we focus on recent progress in sketching techniques~\cite{tang2017exploiting,tang2019randomized} to reduce the 
computational burden of iterative algorithms. 
Sketching projects high-dimensional data onto a lower-dimensional subspace 
using random projections while approximately preserving the original problem structure, 
limiting the scope to over-constrained systems. 
Algorithms such as IHS (Iterative Hessian Sketch) 
and GPIS (Gradient Projection Iterative Sketch) have demonstrated the ability to approximate solutions to 
the original problem using significantly fewer measurements, with theoretical guarantees on 
performance~\cite{woodruff2014sketching}. 
% However, the dimensionality reduction may introduce approximation errors, which 
% can in turn lead to a decrease in convergence speed.
The sketching techniques have been applied to various applications such as massive MIMO \cite{choi2021large,li2021fast}.
However, there is a trade-off between reducible dimension and convergence performance 
due to the approximated structure of sketching.
It is desirable to develop a method that achieves both high computational efficiency and reasonable recovery performance 
when applied to ISTA.

In this paper, we provide a hybrid framework that leverages both original updates and sketched updates of ISTA. 
The proposed algorithm alternates the original and sketched updates in a periodic manner to reduce the 
computational load associated with the matrix-vector product calculation. 
This algorithm design is expected to suppress deterioration 
of convergence performance caused by sketching. 
This hybrid architecture has a potential to flexibly control a trade-off 
between computational efficiency and recovery accuracy, 
which can be adjusted according to system requirements and environmental constraints.
In under-determined systems such as compressed sensing, 
the approximated preservation of the original problem structure via sketching is not guaranteed, 
and even when periodic structures are incorporated, concerns about performance degradation remain.
% The proposed algorithm has a number of tunable parameters that can affect performance 
% but it is not trivial to set appropriate values for all of them, similar to the issues encountered with ISTA.
We therefore incorporate the concept of deep unfolding into the proposed algorithm 
to improve performance through data-driven hyperparameter learning.
% , which interprets 
% iterative optimization algorithms as feedforward neural networks, and enables optimization of embedded learnable 
% parameters by data-driven training~\cite{balatsoukas2019deep}. 
%  due to their algorithm-driven structure.
% This approach builds on Gregor and LeCun's work improving ISTA and demonstrating its effectiveness, with reported success 
% in applications such as MIMO (Multiple-Input Multiple-Output) signal detection and sparse signal estimation ~\cite{deep,song2021soft,deepISTA}.

This paper not only verifies the method but also presents convergence analysis.
Theoretical analysis for deep unfolded algorithms are developing recently.
Chen et al. first theoretically analyzed the convergence of LISTA 
and have made it possible to explain why the introduction of learnable parameters in deep unfolding improves convergence \cite{chen2018theoretical}.
Moreover, a method for reducing learnable parameters based on theoretical analysis has also been reported \cite{liu2018alista}.
There also exists a framework that views deep unfolded algorithms, not limited to LISTA, as fixed-point iterations and discusses their convergence properties \cite{monga2021algorithm}.
Based on these insights, we theoretically analyze the convergence of the proposed algorithm.
The proposed method can achieve linear convergence to a neighborhood of the true sparse signal 
with properly selected parameters.
Furthermore, this analysis provides an explanation for the effectiveness of mixing the original and sketched updates 
instead of using only sketched updates to improve recovery accuracy.
% This implies that the proposed algorithm has possibility to improve the convergence rate of ISTA to a degree comparable to deep unfolded one such as LISTA.
This implies that the proposed DU-PSISTA retains the linear convergence property of deep unfolded networks like LISTA \cite{deep}, 
while significantly reducing the per-iteration computational cost through periodic sketching.

The goal of this work is to develop a theoretically supported novel sparse signal recovery algorithm that integrates sketching for computationally 
efficient operations and deep unfolding for performance improvement, thereby enabling efficient and scalable recovery of 
high-dimensional sparse signals in practical applications.
The main contributions of this paper are summarized as follows:
\begin{itemize}
  \item We propose a novel sparse signal recovery algorithm, DU-PSISTA (Deep Unfolded-Periodic Sketched ISTA), that periodically alternates between standard update and a sketched update for computationally efficient operations 
    and applies deep unfolding to make the parameters in each iteration learnable.
  \item We theoretically prove the linear convergence to a neighborhood of the true sparse signal of the proposed algorithm under certain conditions. 
    Moreover, we provide an explanation for the effectiveness of mixing the original and sketched updates to improve recovery accuracy.
  \item We verify the effectiveness of the proposed method with two types of sketching matrices through numerical experiments.
\end{itemize}

This paper extends our previous work \cite{apsipatokumura} 
that proposed DU-PSISTA algorithm.
The main differences from the previous work are as follows:
\begin{itemize}
  \item addition of convergence analysis,
  \item addition of detailed experimental evaluations in the recovery performance and the execution time,
  \item consideration of variations of matrices used for sketching (specifically, Count Sketch),
  \item and discussion based on a comparison of the learned parameters in the experiments and the convergence analysis.
\end{itemize}

The organization of this paper is as follows:
In Section \ref{sec:preliminaries}, we introduce the preliminaries of the proposed method.
In Section \ref{sec:proposed_method}, we propose the proposed method.
In Section \ref{sec:convergence_analysis}, we theoretically analyze the convergence of the proposed method.
In Section \ref{sec:experiments}, we verify the effectiveness of the proposed method through numerical experiments.
In Section \ref{sec:conclusion}, we conclude the paper.

\section{Preliminaries}
\label{sec:preliminaries}

\subsection{ISTA}
% Compressed sensing~\cite{donoho2006compressed} is the problem of estimating an unknown sparse original signal vector based on the 
% observation vector and a known observation matrix. A sparse vector is a vector in which the number of nonzero elements is 
% small relative to the length of the vector. This problem is defined by the following observation model. 
Consider the following compressed sensing \cite{donoho2006compressed} problem.
Let the original 
sparse signal vector be $\bm{x}=[x_1, x_2, \ldots, x_n]^\mathrm{\mathrm{T}} \in \mathbb{R}^{n}$ and the observation matrix be $\bm{A} \in \mathbb{R}^{m \times n} (m < n)$.
The observation model is expressed as $\bm{y} = \bm{Ax} + \bm{w}$, where the noise vector 
$\bm{w} \in \mathbb{R}^{m}$ follows Gaussian distribution with mean $\bm{0}$ and covariance $\sigma^{2}\bm{I}$.

To solve the sparse signal estimate problem, one can consider the convex optimization problem called LASSO (Least Absolute 
Shrinkage and Selection Operator)~\cite{lasso}, which corresponds to a regularized least squares method and the 
estimated vector is given by
\begin{equation}
  % \hat{\bm{x}} = \underset{\bm{x} \in \mathbb{R}^n}{\arg\min}
  \hat{\bm{x}} = \arg\min_{\bm{x} \in \mathbb{R}^n}
  \left( \frac{1}{2} ||\bm{y} - \bm{Ax}||^{2}_{2} + \tau||\bm{x}||_{1} \right),
  \label{lasso}
\end{equation}
where $||\bm{x}||_{1} = \sum^{n}_{i=1} |x_{i}|$ is the $\ell_{1}$ norm. The regularization coefficient
$\tau \in \mathbb{R} \ (\tau > 0)$ can be used to control the strength of the $\ell_1$ regularization term.
% The first term works to reduce the squared error between the original signal vector and the estimated signal vector, and 
% the second term works to make the estimated signal vector more sparse.

As a method for solving LASSO, ISTA \cite{charnbolle1998nonlinear,daubechies2004iterative} is a well-known algorithm.
ISTA is a method that iteratively minimizes the cost function of (\ref{lasso})
by the proximal gradient method \cite{parikh2014proximal}.
The iterative formula of ISTA is defined as
 \begin{align}
    \bm{z}^{(t)} &=\bm{x}^{(t)} - \eta\bm{A}^{\mathrm{T}} (\bm{Ax}^{(t)}-\bm{y}), \label{eq:ogu}\\
    \bm{x}^{(t+1)}&= {S}_{\lambda}(\bm{z}^{(t)}), \quad (t = 1, 2, \ldots, T), \label{eq:slambda}
  \end{align}
where the parameter $\eta$ is the step size,
$\lambda$ is the thresholding parameter and set to $\lambda = \tau \eta$,
% The iteration index $t$ runs from $1$ to $T$,
% where 
$T$ denotes the maximum number of iterations,
and $\bm{x}^{(1)} = \bm{0}$.
Moreover, ${S}_{\lambda}(\bm{z}^{(t)})$ is called the soft-thresholding function and is expressed as
\begin{align}
    {S}_{\lambda}(x)
    &=\begin{cases}
    x - \lambda & (x \geq \lambda) \\
    0          & (-\lambda < x < \lambda) \\
    x + \lambda & (x\leq -\lambda).
  \end{cases}
\end{align}
The soft-thresholding function ${S}_{\lambda}(\bm{z}^{(t)})$ is applied element-wise when operating on vectors.

It is known that setting the step size as $\eta = 1/L$ is optimal in terms of convergence speed, where $L$ is the maximum eigenvalue of the matrix $\bm{A}^{\mathrm{T}}\bm{A}$, that is, $L = \lambda_{\max}(\bm{A}^{\mathrm{T}}\bm{A})$.
The convergence rate of ISTA is shown to be sublinear 
and improved version called FISTA (Fast ISTA) has also been proposed \cite{beck2009fast}.

\subsection{Deep Unfolding}
% In this section, we describe deep unfolding~\cite{deepmodel} used in this paper.
Deep unfolding \cite{deepmodel} is a method in which learnable parameters are embedded into iterative algorithms and optimized by learning with data.
This approach allows the algorithm to adapt to data while preserving the structure of the original iterations, 
which leads to the reduction of the number of parameters to be optimized compared to typical deep learning models.  
% Moreover, it has been reported that deep unfolding can improve convergence speed in many applications~\cite{deng2014tutorial}.

Let $f_{\bm{\theta}}(\cdot)$ denote the output of an algorithm. 
Here, $\bm{\theta}$
% \equiv (\theta_{1}, \theta_{2}, \ldots, \theta_{m})$ 
represents the set of learnable parameters 
embedded in each iteration of the algorithm, 
e.g., step size, thresholding parameter, etc.
To train these parameters, a dataset $\mathcal{D} \equiv \{(\bm{X}_i, \bm{Y}_i)\}_{i=1}^{D}$
is prepared.
The learnable parameters $\bm{\theta}$ are adjusted through the training process of deep learning such as stochastic gradient descent and mini-batch learning
so as to minimize a loss function between the original signal $\bm{X}_{i}$ and the estimate vector $f_{\bm{\theta}}(\bm{Y}_{i})$.
% The loss function must be set to an appropriate function depending on the problem handled.
For many regression problems, the squared error function is used as the loss function.
That is given by $L(\bm{\theta}) \equiv \sum_{(\bm{X}, \bm{Y})\in \mathcal{D} } ||f_{\bm{\theta}}(\bm{Y})-\bm{X}||^{2}$. 
% Learned parameters $\bm{\theta}^\ast$ are obtained through typical training processes of deep learning such as mini-batch learning.

For stable training, 
one can use incremental learning \cite{ito2019trainable} 
that continues the learning process while increasing the number of iterations of the algorithm being trained (i.e., the number of layers in the network) 
by one each time.

\subsection{Sketching}
\label{sec:sketched-gradient}
Linear sketching \cite{woodruff2014sketching} is a dimensionality reduction technique that applies a matrix with low dimension from the left.
Many sketching literature focuses on over-constrained systems ($m>n$) rather than under-determined systems.
Typical problem settings are as follows. 
The objective is to find a vector $\bm{x} \in \mathbb{R}^n$ that minimizes the relaxed least squares error 
that satisfies $\|\bm{Ax}-\bm{y}\|_2 \leq (1\pm \epsilon)\|\bm{Ax}^\star-\bm{y}\|_2$ with probability $1-\delta$, 
where $\bm{x}^\star$ is the optimal solution of the original problem.
Sketching is a technique that replaces the problem with $\min_{\bm{x} \in \mathbb{R}^n} \| \bm{S} \bm{A} \bm{x} - \bm{S} \bm{y} \|_2^2$, 
where $\bm{S}\in \mathbb{R}^{l \times m}$ ($l \ll m$) is a sampled random matrix called sketching matrix.
For sketch matrices satisfying certain conditions, 
it follows from theory of random projection \cite{Sarlos2006-mk} 
that the relaxed problem condition is satisfied if the sketch size $l$ has the order of $\Theta(n/\epsilon^2)$.

For the sketching matrix $\bm{S}$, Gaussian random matrices, 
% partial Hadamard matrices,
subsampled randomized trigonometric transform sketching matrices, 
Count Sketch, etc., are often used \cite{woodruff2014sketching}. By
selecting an appropriate $\bm{S}$, it becomes possible to perform
dimensionality reduction while preserving the original problem structure
as much as possible and reduce computational cost.
The more specific discussion is in Sect. \ref{sec:sketchingmatrix}.

In recent signal processing, balancing the high dimensionality of data and the constraints of computational resources 
has become a major issue.
Conventional iterative optimization algorithms perform gradient calculations using all data at every step, 
so when the number of rows in the observation matrix or the amount of data is large, computational cost and
memory usage become bottlenecks.
Against this background, Tang et al.~\cite{tang2017exploiting,tang2019randomized} proposed an algorithm that integrates sketching~\cite{woodruff2014sketching} 
into optimization methods based on gradient descent methods, enabling fast and structure-preserving estimation.
As a result, it is possible to efficiently obtain a solution close to that of the original high-dimensional problem 
with low computational complexity.

We briefly review the projected gradient method with sketching proposed in \cite{tang2017exploiting}.
The target optimization problem is the following least squares problem:
\begin{equation}
    \min_{\bm{x} \in \mathcal{K}} f(\bm{x}), \ f(\bm{x}) = \frac{1}{2} \| \bm{A} \bm{x} - \bm{y} \|_2^2,
    \label{eq:orig-least-squares}
\end{equation}
where 
% $\bm{A} \in \mathbb{R}^{m \times n}$ is the observation matrix, $\bm{y} \in \mathbb{R}^m$ is the observation vector, and 
$\mathcal{K} \subset \mathbb{R}^n$ is a convex set imposing structural constraints.
A projection operator is defined as
\begin{equation}
\mathcal{P}_\mathcal{K}(\bm{x}) = 
% \underset{\bm{z} \in \mathbb{R}^n}{\arg\min} 
\arg\min_{\bm{z} \in \mathbb{R}^n}
\left\{ \mathbb{I}_\mathcal{K}(\bm{z}) + \frac{1}{2} \| \bm{x} - \bm{z} \|_2^2 \right\},
\end{equation}
where $\mathbb{I}_\mathcal{K}(\bm{z})$ is the indicator function for the set $\mathcal{K}$.
By using the projection operator, the solution to~\eqref{eq:orig-least-squares} can be obtained iteratively by the update 
formula $ \bm{x}^{(t+1)} = \mathcal{P}_\mathcal{K} \left( \bm{x}^{(t)} - \eta \nabla f(\bm{x}^{(t)}) \right), (t=1, 2, \ldots, T)$, where
$ \eta $ is the step size. 
Since $ \nabla f(\bm{x}) = \bm{A}^{\mathrm{T}} ( \bm{A} \bm{x} - \bm{y} ) $, this can be rewritten as 
$ \bm{x}^{(t+1)} = \mathcal{P}_\mathcal{K} \left( \bm{x}^{(t)} - \eta \bm{A}^{\mathrm{T}} ( \bm{A} \bm{x}^{(t)} - \bm{y} ) \right) $.

By introducing sketching, the matrix-vector product operations in $\bm{A} \bm{x}$ and 
$\bm{A}^{\mathrm{T}} ( \bm{A} \bm{x} - \bm{y} )$ used in gradient calculation can be approximated in a lower-dimensional space.
A sketching matrix $\bm{S} \in \mathbb{R}^{l \times m}$ ($l \ll m$) is introduced.
The number $l$ of rows is called sketch size. We can consider an alternative least 
squares problem
\begin{equation}
    \min_{\bm{x} \in \mathcal{K}} \tilde{f}(\bm{x}), \ \tilde{f}(\bm{x}) = \frac{1}{2} \| \bm{S} \bm{A} \bm{x} - \bm{S} \bm{y} \|_2^2.
\end{equation}
Accordingly, the gradient in this case is given by 
$ \nabla \tilde{f}(\bm{x}) = \bm{A}^{\mathrm{T}} \bm{S}^{\mathrm{T}} \bm{S} ( \bm{A} \bm{x} - \bm{y} ) $, and the update formula 
is replaced by
\begin{equation}
    \bm{x}^{(t+1)} = \mathcal{P}_\mathcal{K} \left( \bm{x}^{(t)} - \eta \bm{A}^{\mathrm{T}} \bm{S}^{\mathrm{T}} (\bm{S}\bm{A} \bm{x}^{(t)} - \bm{Sy} ) \right).
    \label{eq:sketched-update}
\end{equation}

Tang et al. theoretically demonstrated that if the sketch size $l$ is appropriately selected 
and algorithms such as GPIS and IHS are used, 
a solution close to the optimal 
solution of the original problem can be obtained~\cite{woodruff2014sketching, pilanci2016iterative}.

\section{Proposed Method}
\label{sec:proposed_method}
In this paper, we introduce sketching techniques to ISTA, which is inspired by \cite{tang2017exploiting}, 
and propose a hybrid algorithm that combines original gradient 
update with sketched gradient update. Furthermore, by applying deep unfolding to this algorithm, we make the parameters in 
each iteration learnable, aiming to improve recovery accuracy while maintaining computational efficiency. 
% Below, we explain the overview of each method and the configuration of the proposed method.

\subsection{Sketched ISTA}
To improve the computational efficiency of ISTA, we apply a sketching technique to matrix $\bm{A}$ to reduce the 
computational load of iterative calculations. 
First, by sketching $\bm{A} \in \mathbb{R}^{m \times n}$ and $\bm{y} \in \mathbb{R}^{m}$ with a 
random matrix $\bm{S} \in \mathbb{R}^{l \times m}(l \ll m)$, 
pre-computation of $\bm{S} \bm{y}$ and $\bm{S} \bm{A}$ is performed.
We then consider the transformed observation model as:
\begin{equation}
\bm{S} \bm{y} = \bm{S} \bm{A} \bm{x} + \bm{S} \bm{w},
\end{equation}
% Moreover, through sketching, we
and consider the sketched problem:
\begin{equation}
  \min_{\bm{x} \in \mathbb{R}^{n}} \frac{1}{2} \| \bm{SAx} - \bm{Sy} \|_2^2 + \tau \| \bm{x} \|_1.
  \label{eq:sketched-lasso}
\end{equation}
This allows \emph{sketched ISTA} to execute the
proximal gradient iterations as conventional ISTA, but in a
smaller dimension.
The iterative formula for sketched ISTA is given by:
\begin{align}
\bm{z}^{(t)} &= \bm{x}^{(t)} - \eta  {\bm{A}}^{\mathrm{T}} {\bm{S}}^{\mathrm{T}} (\bm{S}\bm{A} \bm{x}^{(t)} - \bm{S}{\bm{y}}), \label{eq:sgu}\\
\bm{x}^{(t+1)} &= {S}_{\lambda}(\bm{z}^{(t)}), \quad (t = 1, 2, \ldots, T), 
\end{align}
where the maximum number of iterations is $T$.
The original ISTA includes the matrix-vector product $\bm{A} \bm{x}^{(t)}$ which requires $mn$ multiplications and $m(n-1)$ additions.
By introducing the sketching matrix $\bm{S}$ and performing pre-computation of the matrix product $\bm{SA}$, 
these factors for the product $(\bm{S}\bm{A}) \bm{x}^{(t)}$ are reduced to $ln$ multiplications and $l(n-1)$ additions.
% Therefore, the number of additions and multiplications required 
% in each iteration decreases from $4mn + 3n - m$ to $4ln + 3n - l$.}

% The iteration index $t$ runs from $1$ to $T$, 
% Consequently, by introducing the sketching matrix $\bm{S}$, the number of additions and multiplications required 
% in each iteration decreases from $4mn + 3n - m$ to $4ln + 3n - l$.

We adopt the Gaussian sketch~\cite{tang2017exploiting} in this paper. Specifically, each element of the sketching matrix 
$\bm{S} \in \mathbb{R}^{l \times m}$ is independently drawn from a Gaussian distribution with zero mean and variance
$1/l$, that is, $S_{ij} \sim \mathcal{N}(0, 1/l)$. 
It should be noted that the theory of the original sketching discussed in Sect. \ref{sec:sketched-gradient} cannot be directly applied to the problem in this section 
because the problem is compressed sensing, i.e., under-determined.
In other words, there is no guarantee that the solution of the sketched ISTA is close to the solution of the original problem.

\subsection{Periodic Sketched ISTA}
In the sketched ISTA, 
the algorithm introduces a different gradient ${\bm{A}}^{\mathrm{T}} {\bm{S}}^{\mathrm{T}} (\bm{S}\bm{A} \bm{x}^{(t)} - \bm{S}{\bm{y}})$ 
instead of the original gradient ${\bm{A}}^{\mathrm{T}} (\bm{A} \bm{x}^{(t)} - {\bm{y}})$ of ISTA at every step.
The accumulated difference could encourage updates in a direction different from the desired solution 
and lead to poor performance.
One possible solution is to implement a process for pivoting toward the right direction.
In this paper, we propose \emph{PSISTA} (Periodic Sketched ISTA) that periodically switches between standard ISTA and sketched ISTA.
Figure~\ref{alg} illustrates the iterative processes of the proposed algorithm.

Specifically, the iterative process of PSISTA is defined as follows for the $t$-th iteration:

\begin{align}
 \text{OGU}(\bm{x}) &= \bm{x} - \eta\bm{A}^{\mathrm{T}} (\bm{Ax} -\bm{y}), \\
 \text{SGU}(\bm{x}) &= \bm{x} - \eta\bm{A}^{\mathrm{T}} \bm{S}^{\mathrm{T}} (\bm{S} \bm{A} \bm{x} - \bm{Sy} ), \\
  \bm{z}^{(t)} &= 
  \begin{cases}
    \text{OGU}(\bm{x}^{(t)}) & \text{if } (t-1) \bmod P = 0, \\
    \text{SGU}(\bm{x}^{(t)}) & \text{otherwise},
  \end{cases} \\
  \bm{x}^{(t+1)} &= {S}_{\lambda}(\bm{z}^{(t)}),
\end{align}
where OGU and SGU stand for Original Gradient Update \eqref{eq:ogu} and Sketched Gradient Update \eqref{eq:sgu}, respectively.

A positive integer parameter $P$ is named period and used to control the period of introducing the original gradient update. Thus, it 
becomes possible to leverage the advantages of the lower computational cost of sketched gradient update while suppressing 
the degradation of estimation accuracy by periodically inserting original gradient update. 
When $P=1$, the proposed method reduces to ISTA, since the original gradient update is used at every iteration.
Note that the original gradient update is applied at the beginning of each period in PSISTA.

The purpose of this method is to balance computational efficiency and estimation accuracy by integrating the low computational
cost of the sketched ISTA with the high representational capability of standard ISTA.
The introduction of the original gradient plays a role in periodically adjusting the direction of updates.
The period $P$ controls the trade-off between reducing computational complexity through sketching and improving accuracy through gradient adjustment.
\begin{figure}
\centering
\includegraphics[width=1.0\linewidth]{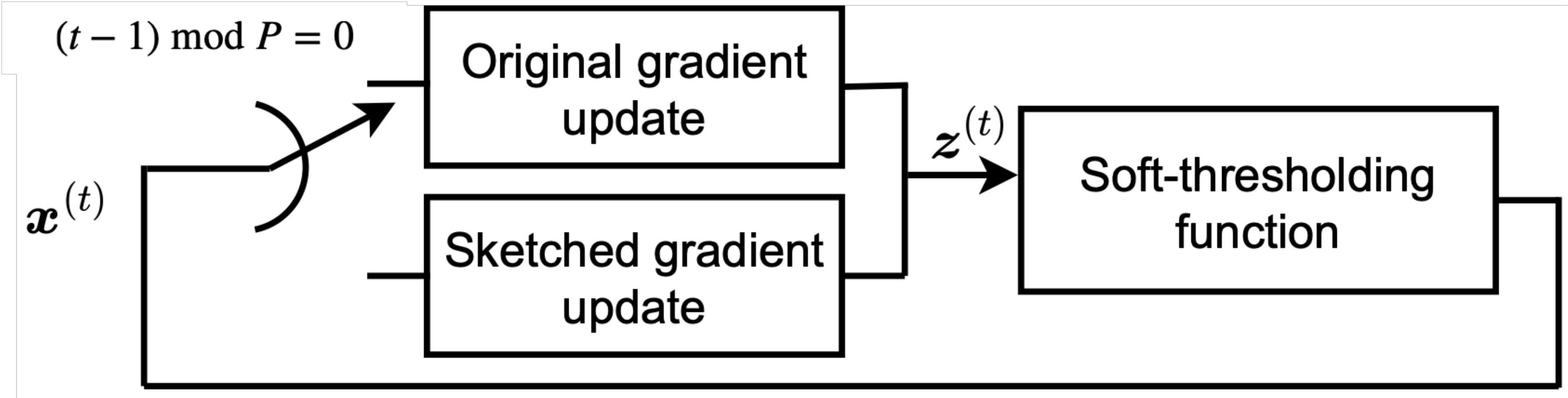}
\caption{Algorithm diagram of Periodic Sketched ISTA}
\label{alg}
\end{figure}

\subsection{PSISTA with Deep Unfolding: DU-PSISTA}
The PSISTA has the potential to flexibly control a trade-off between computational efficiency and recovery accuracy 
but there remains concerns that 
there is no performance guarantee for the sketched problem in under-determined systems 
and then the performance will monotonically decrease with the number of times sketching is introduced (i.e., $P$ increases).
% is concern that the performance will monotonically decrease with the number of times sketching is introduced (i.e., $P$ increases).
To further improve the performance of PSISTA, we apply deep unfolding to the proposed PSISTA method through data-driven hyperparameter 
learning. By unfolding each iterative step of PSISTA in the time direction, we extend it to a learnable structure. 
In each layer, parameters such as step size are introduced as learnable parameters.

We set the step size $\eta$ and thresholding parameter $\lambda$ in 
PSISTA as the learnable parameters. 
When the maximum number of iterations of the algorithm is $T$, each parameter at the $t$-th iteration can be 
expressed as $\eta_t, \lambda_t$. We learn $\bm{\eta} \equiv(\eta_1,\eta_2,\ldots,\eta_T)$ and 
$\bm{\lambda} \equiv(\lambda_1,\lambda_2,\ldots,\lambda_T)$ using deep unfolding. The learned parameters 
are used during performance measurement.
We name the proposed PSISTA to which deep unfolding is applied as 
\emph{DU-PSISTA} (Deep Unfolded-Periodic Sketched ISTA).
If these parameters are trained properly, 
it is expected to design an algorithm that achieves performance close to that of the original ISTA before dimensionality reduction by sketching, 
while maintaining high computational efficiency.

The update rules of DU-PSISTA is summarized in Algorithm~\ref{alg:du-psista}.
It should be emphasized that the sketching matrix $\bm{S}$ is fixed during the execution 
so that the pre-computation of $\bm{SA}$ and $\bm{Sy}$ is only required once before the algorithm is executed.
Furthermore, as long as the observation matrix $\bm{A}$ is fixed, 
even when new observations $\bm{y}$ are obtained, 
the pre-computed $\bm{SA}$ can be directly utilized in the algorithm.
\begin{algorithm}[t]
  \caption{DU-PSISTA}
  \label{alg:du-psista}
  \begin{algorithmic}
    \REQUIRE $\bm{x}^{(1)}=\bm{0},\bm{SA}, \bm{Sy}, P, T$, learned parameters $\bm{\eta}, \bm{\lambda}$
    \ENSURE $\bm{x}^{(T+1)}$
    \FOR{$t = 1$ to $T$}
      \IF{$(t-1) \bmod P = 0$}
        \STATE $\bm{x}^{(t+1)} = \bm{x}^{(t)} - \eta_t \bm{A}^{\mathrm{T}} (\bm{A} \bm{x}^{(t)} - \bm{y})$
      \ELSE
        \STATE $\bm{x}^{(t+1)} = \bm{x}^{(t)} - \eta_t (\bm{SA})^{\mathrm{T}} ((\bm{SA}) \bm{x}^{(t)} - \bm{Sy})$
      \ENDIF
      \STATE $\bm{x}^{(t+1)} = {S}_{\lambda_t}(\bm{x}^{(t+1)})$
    \ENDFOR
  \end{algorithmic}
\end{algorithm}

% Through deep unfolding, it is expected that the overall convergence speed and accuracy of the algorithm can be improved by 
% optimizing parameters for each iteration to data while achieving both the computational efficiency of Sketched ISTA 
% and the estimation accuracy of standard ISTA. 

\subsection{Computational Complexity}
This section evaluates the computational complexity for the standard ISTA, sketched ISTA, and 
PSISTA. The computational complexity here denotes the total number of additions and 
multiplications required for $T$ iterations of each algorithm.
% until the predicted value is output when each algorithm iterates $T$ times. 
Note that we do not include some pre-computations in the computational complexity; 
the derivation of the sketched matrix $\bm{SA}$ and vector $\bm{Sy}$, and deep unfolding 
because these are computed only once before the algorithm is executed and fixed during the execution.
% However, in SGU, ${\bm{A}}^{\top}{\bm{S}}^{\top}$, $\bm{S}\bm{A}$, and $\bm{S}{\bm{y}}$ are assumed to be pre-computed 
% and are not included in the computational complexity.  Additionally, note that the computational complexity does not 
% include the additional cost introduced by deep unfolding. 

We first consider the computational complexity per iteration for both ISTA and 
sketched ISTA.
In both cases, matrix-vector products included in gradient computation 
account for the largest proportion of addition and multiplication counts. 
Specifically, 
for the standard ISTA, $\bm{Ax}^{(t)}$ is the product of $m\times n$ matrix and $n\times 1$ vector 
so that it requires $mn$ multiplications and $m(n-1)$ additions.
For sketched ISTA, these can be reduced to $ln$ multiplications and $l(n-1)$ additions.
% matrix-vector 
% products with $\bm{A}$ or $\bm{S}\bm{A}$ are required during gradient computation, and this computation accounts 
% for the largest proportion of both addition and multiplication counts.}
% 
% In {Sketched ISTA}, since the matrix-vector multiplications in the gradient update dominate 
% the per-iteration computational complexity, the original matrix $\bm{A}$ is multiplied by a sketching 
% matrix $\bm{S}$ to reduce its size from $m \times n$ to $l \times n$. 
{By designing the sketching 
such that $l \ll m$, it becomes possible to significantly reduce the computational complexity per iteration overall computational.}
The complexity of the soft-thresholding operation can be $n$ additions.
In summary, 
{per-iteration computational complexities of ISTA}, $O_{\text{ISTA}}$, and {sketched ISTA}, $O_{\text{Sketch}}$, are given by:
\begin{align}
O_{\text{ISTA}} &= 4mn + 3n - m, \label{eq:comp_ista}\quad \\
O_{\text{Sketch}} &= 4ln + 3n - l. \quad 
\end{align}
The total complexity of $T$ iterations of the standard ISTA is $C_{\text{ISTA}} = T \cdot O_{\text{ISTA}}$.

We next consider the complexity of the proposed PSISTA, 
which includes the original ISTA-based update and sketched ISTA-based update.
The number of executions of the original ISTA-based update, $N_{\text{ISTA}}$, and sketched ISTA-based update, $N_{\text{Sketch}}$, are given by:
\begin{align}
N_{\text{ISTA}} &= \left\lfloor \frac{{T}}{P} \right\rfloor + 
\begin{cases}
% 1 & \text{if } T \bmod P \geq 1 \\
% 0 & \text{otherwise},
0 & \text{if } (T-1) \bmod P = P-1 \\
1 & \text{otherwise},
\end{cases} \\
N_{\text{Sketch}} &= T - N_{\text{ISTA}}.
\end{align}
From the above, the total computational complexity $C_{\text{PSISTA}}$ of PSISTA is:
\begin{equation}
\begin{aligned}
C_{\text{PSISTA}} &= N_{\text{ISTA}} \cdot O_{\text{ISTA}} + N_{\text{Sketch}} \cdot O_{\text{Sketch}} \\
&= N_{\text{ISTA}}(4mn + 3n - m) \!
 +\!(T - N_{\text{ISTA}})(4ln + 3n - l).
 \label{eq:comp}
\end{aligned}
\end{equation}

Introducing sketching even once during the iterative process reduces the overall computational complexity compared to standard ISTA.

\subsection{Choice of Sketching Matrix}
\label{sec:sketchingmatrix}
The computational complexity discussed in the previous subsection focuses on the matrix vector product in the algorithm update.
However, depending on a system, 
pre-computation of the sketched matrix $\bm{SA}$ may also need to be considered,
for example, when the observation matrix $\bm{A}$ changes frequently during algorithm execution.
The computational complexity of the matrix product of $\bm{SA}$ is $\mathcal{O}(lmn)$ due to the $lmn$ multiplications 
as long as the dense sketching matrix $\bm{S}$ is used such as Gaussian sketch.
If the number $T$ of algorithm iterations is as large as $l$ or $m$ and the observation matrix $\bm{A}$ switches with each algorithm execution, 
this pre-computation also dominates the computational complexity to the same order as the complexity for the algorithm update.

For such cases, 
one possible solution is to use a sparse matrix as the sketching matrix $\bm{S}$ to reduce the computational complexity of the matrix product of $\bm{SA}$.
Count Sketch~\cite{woodruff2014sketching} is the representative of the sparse sketching matrices.
It is a matrix with only one non-zero elements in each column.
The non-zero elements are $+1$ or $-1$ with equal probability.
An example of Count Sketch when $(l,m)=(3,5)$ is 
\[
  \bm{S} = \begin{pmatrix}
  0 & 1 & 1 & 0 & -1 \\
  0 & 0 & 0 & -1 & 0 \\
  1 & 0 & 0 & 0 & 0 \\
  \end{pmatrix}.
\]

The computational complexity of the matrix product of $\bm{SA}$ with Count Sketch is $\mathcal{O}(\mathrm{nnz}(\bm{A}))\leq\mathcal{O}(mn)$, 
where $\mathrm{nnz}(\bm{A})$ is the number of non-zero elements in $\bm{A}$.
This is lower than the computational complexity $\mathcal{O}(lmn)$ of that with Gaussian sketch.
However, in general, the performance of Count Sketch is not as good as that of Gaussian sketch in terms of recovery accuracy 
so that the choice of the sketching matrix depends on the system and the application.
We evaluate the performance of Count Sketch for DU-PSISTA in the computer experiments in Sect.~\ref{sec:exp_countsketch}.

\section{Convergence Analysis}
\label{sec:convergence_analysis}
In this section, we analyze the convergence of the proposed DU-PSISTA.
We introduce some assumptions and then derive the estimated error bound of the proposed DU-PSISTA.
The detailed proof of Theorem~\ref{thm:error_bound} is provided in Appendix.

We use the notation $\bm{y}=\bm{Ax}^\star + \bm{w}$ where $\bm{x}^\star$ is the true sparse signal to be estimated 
(not necessarily the optimal solution of the LASSO problem) 
with the support set $\Omega =\{i: x_i^\star \neq 0\}$.
The update rule of the proposed DU-PSISTA is summarized as follows:
\begin{align}
  \bm{x}^{(t+1)}&=S_{\lambda_t} \left(\bm{x}^{(t)}-\eta_t\bm{A}_t^\mathrm{T}\left(\bm{A}_t\bm{x}^{(t)}-\bm{y}_t\right)\right) \\ 
  &= S_{\lambda_t} \left(\bm{x}^{(t)}-\eta_t\bm{A}_t^\mathrm{T}\bm{A}_t\bm{x}^{(t)}+\eta_t\bm{A}_t^\mathrm{T}\bm{A}_t\bm{x}^\star+\eta_t\tilde{\bm{A}}_t^\mathrm{T}\bm{w}\right),
  \label{eq:update}
\end{align}
where 
\begin{equation}
  \bm{A}_t, \tilde{\bm{A}}_t, \bm{y}_t = 
  \begin{cases}        \bm{A}, \bm{A}, \bm{y} & \mathrm{if} \ (t-1)\bmod P=0, \\        \bm{SA}, \bm{S}^\mathrm{T}\bm{SA}, \bm{Sy} & \mathrm{if} \ \mathrm{otherwise}.    \end{cases}
\end{equation}

The following definition and assumptions are used to derive the estimated error bound of the proposed DU-PSISTA.
\begin{definition}
  For the matrices $\bm{A}_t=[\bm{A}_{t,1},\ldots,\bm{A}_{t,n}]=[A_{t,ij}]$, $\bm{A}=[A_{ij}]$, and $\tilde{\bm{A}}_t=[\tilde{\bm{A}}_{t,1},\ldots,\tilde{\bm{A}}_{t,n}]=[\tilde{A}_{t,ij}]$,
  \begin{align}        \max_{i,j}|\bm{A}_{t,i}^\mathrm{T}\bm{A}_{t,j}| &= \begin{cases}        \tilde \mu & \mathrm{if} \ (t-1)\bmod P=0 \\        \tilde \xi & \mathrm{if} \ \mathrm{otherwise}    \end{cases} \\         \max_{i,j}|\tilde{A}_{t,ij}| &= \begin{cases} C & \mathrm{if} \ (t-1)\bmod P=0 \\ D & \mathrm{if} \ \mathrm{otherwise}    \end{cases}   \end{align}
  \label{def:max}
\end{definition}
\begin{assumption}
  The true signal $\bm{x}^\star$ and noise vector $\bm{w}$ satisfy the following conditions:
  % \begin{align}
  %   (\bm{x}^\star,\bm{w})&\in\mathcal{X}(B,s,\sigma) \nonumber \\
  %   &:=\{(\bm{x},\bm{\omega})         | \|x_i^\star|\leq B, \forall i, \|\bm{x}^\star\|_0\leq s, \|\bm{\omega}\|_1\leq\sigma\}.
  % \end{align}
  % \begin{align}
  %   (\bm{x}^\star,\bm{w})&\in\mathcal{X}(s,\epsilon_w) \nonumber \\
  %   &:=\{(\bm{x},\bm{\omega})         | \|\bm{x}^\star\|_0\leq s, \|\bm{\omega}\|_1\leq\epsilon_w\}.
  % \end{align}
  \begin{equation}
    (\bm{x}^\star,\bm{w})\in\mathcal{X}(s,\epsilon_w)=\{(\bm{x},\bm{\omega}) | \|\bm{x}^\star\|_0\leq s, \|\bm{\omega}\|_1\leq\epsilon_w\}.
  \end{equation}
  \label{assum:xw}
\end{assumption}
\begin{assumption}
  % The sketching matrix $\bm{S}$ is Gaussian sketch,
  % where each element is drawn from the Gaussian distribution with mean 0 and variance $1/l$.
  Each matrix $\bm{A}_t\in\{\bm{A}, \bm{SA}\}$ satisfies RIP (Restricted Isometry Property) \cite{candes2005decoding} of order $2s$ with RIP constant $\delta_{2s}\in(0,1)$.
  Specifically, for any $2s$-sparse vector $\bm{v}\in\mathbb{R}^n$, 
  $(1-\delta_{2s})\|\bm{v}\|^2 \leq \|\bm{A}_t\bm{v}\|^2 \leq (1+\delta_{2s})\|\bm{v}\|^2$ holds.
  Additionally, the sketching matrix $\bm{S}$ is a random projection matrix that satisfies JL (Johnson-Lindenstrauss) lemma \cite{johnson1984extensions}.
  \label{assum:sketch}
\end{assumption}
\begin{assumption}
  For $\forall i\notin \Omega $, the following conditions hold:
  \begin{align}
    \lambda_t\geq\begin{cases}            \eta_t(\tilde{\mu}\|\bm{x}^{(t)}-\bm{x}^\star\|_1+C\epsilon_w) & \mathrm{if} \ (t-1)\bmod P=0 \\            \eta_t(\tilde{\xi}\|\bm{x}^{(t)}-\bm{x}^\star\|_1+D\epsilon_w) & \mathrm{if} \ \mathrm{otherwise}    \end{cases}
  \end{align}
  \label{assum:lambda}
\end{assumption}
\begin{assumption}
  The learned parameters of the proposed DU-PSISTA satisfy the following conditions using constants $\tilde{\eta}_C, \tilde{\eta}_D, \tilde{\lambda}>0$:
  \begin{align}        \sum_{t\in[T],(t-1)\bmod{P}=0}\eta_t &\leq \tilde{\eta}_C, \\         \sum_{t\in[T],(t-1)\bmod{P}\neq0}\eta_t &\leq \tilde{\eta}_D, \\         \sum_{t=1}^T \lambda_t &\leq \tilde{\lambda}.    \end{align}
  \label{assum:eta}
\end{assumption}
% \begin{assumption}
%   The learned parameters of the proposed DU-PSISTA satisfy the following condition:
%   \begin{align}
%     \|\bm{I}-\eta_t\bm{A}_t^\mathrm{T}\bm{A}_t\|<1 \ \forall t.
%   \end{align}
%   \label{assum:eye}
% \end{assumption}
Note that the deterministically bounded noise condition in Assumption~\ref{assum:xw} does not correspond to the observation model with Gaussian noise 
and the parameter $\lambda_t$ in Assumption~\ref{assum:lambda} is determined through learning in practice, 
but they are basic assumptions that are also used in \cite{chen2018theoretical} and sufficient conditions used only for analysis, not for algorithm design.

The following theorem is the main result of this section.
\begin{theorem}
  On Assumptions \ref{assum:xw}--\ref{assum:eta}, the estimated error bound of the proposed DU-PSISTA is given by:
  \begin{align}
    \|\bm{x}^{(t+1)}-\bm{x}^\star\|<&\left(\prod_{t'=1}^t\|\bm{I}-\eta_{t'}\bm{A}_{t'}^\mathrm{T}\bm{A}_{t'}\|\right)\|\bm{x}^{(1)}-\bm{x}^\star\| \nonumber \\
    &+\sqrt{s}\epsilon_w(\tilde{\eta}_C C+\tilde{\eta}_D D)+\sqrt{s}\tilde{\lambda}.
    \label{eq:theorem}
  \end{align}
  \label{thm:error_bound}
\end{theorem}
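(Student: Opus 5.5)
We follow the LISTA analysis of \cite{chen2018theoretical}. The plan has three parts: show that the iterates never leave the true support $\Omega$, write one iteration as a linear contraction plus a bounded perturbation, and unroll the resulting recursion.

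\emph{Step 1: no false positives.} We argue by induction on $t$ that $\mathrm{supp}(\bm{x}^{(t)})\subseteq\Omega$; the base case $\bm{x}^{(1)}=\bm{0}$ is trivial. Fix $i\notin\Omega$, so $x_i^\star=0$. By \eqref{eq:update}, the $i$-th entry of the argument of $S_{\lambda_t}$ is
\[
x_i^{(t)}-\eta_t\sum_{j}\bm{A}_{t,i}^\mathrm{T}\bm{A}_{t,j}\,(x_j^{(t)}-x_j^\star)+\eta_t\tilde{\bm{A}}_{t,i}^\mathrm{T}\bm{w}.
\]
The first term vanishes by the induction hypothesis. By Definition~\ref{def:max} and Assumption~\ref{assum:xw}, the magnitude of the remaining terms is at most $\eta_t(\tilde\mu\|\bm{x}^{(t)}-\bm{x}^\star\|_1+C\epsilon_w)$ in OGU steps, and at most the same expression with $(\tilde\xi,D)$ in SGU steps. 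Assumption~\ref{assum:lambda} then forces the soft-threshold output to be zero, so $x_i^{(t+1)}=0$.

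\emph{Step 2: one-step recursion on $\Omega$.} On $\Omega$, write the soft-threshold as $S_{\lambda_t}(z)=z-\lambda_t\,\bm{g}$ with $\|\bm{g}\|_\infty\le1$ and $\bm{g}$ supported on $\Omega$, so $\|\bm{g}\|_2\le\sqrt{s}$. Then
\[
\bm{x}^{(t+1)}-\bm{x}^\star=(\bm{I}-\eta_t\bm{A}_t^\mathrm{T}\bm{A}_t)(\bm{x}^{(t)}-\bm{x}^\star)+\eta_t(\tilde{\bm{A}}_t^\mathrm{T}\bm{w})_\Omega-\lambda_t\bm{g},
\]
where we use that both $\bm{x}^{(t+1)}-\bm{x}^\star$ and $\bm{g}$ live on $\Omega$. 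Restricting the noise term to $\Omega$ is legitimate because the off-support entries of $\bm{x}^{(t+1)}$ are exactly zero by Step~1.

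Each entry of $\tilde{\bm{A}}_t^\mathrm{T}\bm{w}$ is bounded by $C\epsilon_w$ in OGU steps and by $D\epsilon_w$ in SGU steps. Hence $\|(\tilde{\bm{A}}_t^\mathrm{T}\bm{w})_\Omega\|\le\sqrt{s}\,C\epsilon_w$ or $\sqrt{s}\,D\epsilon_w$ respectively. Taking norms gives
\[
e_{t+1}\le\|\bm{I}-\eta_t\bm{A}_t^\mathrm{T}\bm{A}_t\|\,e_t+\sqrt{s}\,\eta_t c_t\epsilon_w+\sqrt{s}\,\lambda_t,
\]
where $e_t=\|\bm{x}^{(t)}-\bm{x}^\star\|$ and $c_t\in\{C,D\}$ according to the step type.

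\emph{Step 3: unrolling.} Iterating the recursion from $t'=1$ to $t$ yields the product term times $e_1$. It also yields the sum
\[
\sum_{t'}\Bigl(\prod_{k>t'}\|\bm{I}-\eta_k\bm{A}_k^\mathrm{T}\bm{A}_k\|\Bigr)\sqrt{s}\,(\eta_{t'}c_{t'}\epsilon_w+\lambda_{t'}).
\]
To collapse this sum we need each factor $\|\bm{I}-\eta_k\bm{A}_k^\mathrm{T}\bm{A}_k\|$ to be at most $1$. Here the operator norm should be read as restricted to $\Omega$, i.e.\ to $s$-sparse directions. Under RIP (Assumption~\ref{assum:sketch}), the eigenvalues of $(\bm{A}_k^\mathrm{T}\bm{A}_k)_{\Omega\Omega}$ lie in $[1-\delta_{2s},1+\delta_{2s}]$, so $\eta_k\le 1/(1+\delta_{2s})$ suffices for the restricted factor to be below $1$. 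Once the factors are at most $1$, we drop them and split the sum by step type. Assumption~\ref{assum:eta} then bounds the sums over OGU and SGU steps by $\tilde\eta_C C+\tilde\eta_D D$ and the threshold sum by $\tilde\lambda$. This gives \eqref{eq:theorem}; the strict inequality follows from strictness in at least one of these bounds.

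\emph{Main obstacle.} The delicate point is justifying that the contraction factors are at most $1$. As written, the theorem uses the full operator norm, and for an under-determined $\bm{A}$ (where $\bm{A}^\mathrm{T}\bm{A}$ is singular) that norm is at least $1$ anyway. I would therefore interpret the norm as restricted to $\Omega$, which Step~1 makes legitimate, and invoke RIP with a step-size bound as above. For the sketched matrices, the JL property of $\bm{S}$ is what guarantees that $\bm{SA}$ inherits RIP. A secondary subtlety is that Assumption~\ref{assum:lambda} depends on the current iterate, so the induction in Step~1 must be run simultaneously with the error recursion.
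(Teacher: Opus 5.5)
Your proposal is correct and follows the paper's appendix proof essentially step for step. Both arguments use the induction showing $x_i^{(t)}=0$ off $\Omega$ via Assumption~\ref{assum:lambda}, then the subgradient form of soft-thresholding to get $h^{t+1}\le\|\bm{I}-\eta_t\bm{A}_t^\mathrm{T}\bm{A}_t\|_\Omega h^t+\sqrt{s}\,\eta_t c_t\epsilon_w+\sqrt{s}\,\lambda_t$, then unrolling, and finally collapsing the sums with Assumption~\ref{assum:eta}. The one real difference is that you explicitly justify dropping the product factors in the unrolled sum, using the restricted norm on $\Omega$, RIP, and a step-size bound; the paper drops them silently even though Assumptions~\ref{assum:lambda}--\ref{assum:eta} do not imply this, so your extra step-size condition is one the paper's argument also tacitly needs.
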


From the theorem, 
the following colloraries are derived.
\begin{corollary}
  The estimated error bound of the proposed DU-PSISTA converges linearly to a constant value depending on the sparsity $s$ and the noise level $\epsilon_w$ 
  with the properly selected step size $\eta_t$.
\end{corollary}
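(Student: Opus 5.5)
The plan is to take Theorem~\ref{thm:error_bound} as given and treat its right-hand side as two parts. The first is the transient term
\[
E_t=\Bigl(\prod_{t'=1}^t\|\bm{I}-\eta_{t'}\bm{A}_{t'}^\mathrm{T}\bm{A}_{t'}\|\Bigr)\|\bm{x}^\star\|,
\]
where $\bm{x}^{(1)}=\bm{0}$. The second is the residual term
\[
R=\sqrt{s}\epsilon_w(\tilde{\eta}_C C+\tilde{\eta}_D D)+\sqrt{s}\tilde{\lambda}.
\]
The corollary follows once two things are shown: under a suitable step-size choice, $E_t\le \rho^t\|\bm{x}^\star\|$ for some fixed $\rho<1$; and $R$ is a $t$-independent constant determined by $s$ and $\epsilon_w$ together with the fixed quantities $C,D,\tilde{\eta}_C,\tilde{\eta}_D,\tilde{\lambda}$. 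Then $\|\bm{x}^{(t+1)}-\bm{x}^\star\|<\rho^t\|\bm{x}^\star\|+R$, which is a linear-type contraction to a neighborhood of radius $R$, and $\limsup_t\|\bm{x}^{(t+1)}-\bm{x}^\star\|\le R$.

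\textbf{Main obstacle: the factors of $E_t$.} Since $m<n$ (and $l<m$), $\bm{A}_t^\mathrm{T}\bm{A}_t$ is singular. So the full operator norm $\|\bm{I}-\eta_t\bm{A}_t^\mathrm{T}\bm{A}_t\|$ equals at least $1$ for every $\eta_t$, and no step size makes the literal product decay. I would get around this by reading the factors as restricted operators, which is what the proof of Theorem~\ref{thm:error_bound} actually uses. Assumption~\ref{assum:lambda} guarantees $x^{(t)}_i=0$ for $i\notin\Omega$ at every iteration. Hence the error $\bm{x}^{(t)}-\bm{x}^\star$ is supported on $\Omega$, with $|\Omega|\le s$, and the linear part of the recursion only acts through $\bm{I}-\eta_t\bm{A}_{t,\Omega}^\mathrm{T}\bm{A}_{t,\Omega}$. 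The norm in each factor can therefore be replaced by $\|\bm{I}_\Omega-\eta_t\bm{A}_{t,\Omega}^\mathrm{T}\bm{A}_{t,\Omega}\|$. If that replacement is not transparent from the way the theorem is stated, I would re-enter its proof at the single step where the product arises and make the restriction explicit there.

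\textbf{Choosing the step size.} By Assumption~\ref{assum:sketch}, RIP of order $2s\ge s$ holds for both $\bm{A}$ and $\bm{SA}$. So the eigenvalues of $\bm{A}_{t,\Omega}^\mathrm{T}\bm{A}_{t,\Omega}$ lie in $[1-\delta_{2s},1+\delta_{2s}]$ for every $t$, whether the step is an OGU or an SGU. The restricted factor is then at most $\max\{|1-\eta_t(1-\delta_{2s})|,\,|1-\eta_t(1+\delta_{2s})|\}$. The choice $\eta_t=1$ (or any $\eta_t\in(0,2/(1+\delta_{2s}))$) makes this at most $\rho:=\delta_{2s}<1$, uniformly over OGU and SGU steps, so $E_t\le\rho^t\|\bm{x}^\star\|$. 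Two consistency checks remain. First, these $\eta_t$ must still obey the summability in Assumption~\ref{assum:eta} for the horizon $T$; the sums are finite because $T$ is finite, so $\tilde{\eta}_C,\tilde{\eta}_D$ are simply fixed by this choice. Second, $\lambda_t$ can be taken equal to the lower bound in Assumption~\ref{assum:lambda}. That lower bound is proportional to $\tilde{\mu}\|\bm{x}^{(t)}-\bm{x}^\star\|_1+C\epsilon_w$ (or the $\tilde{\xi}, D$ version), so $\tilde{\lambda}$ consists of a geometrically decaying part plus a part of order $T\epsilon_w$. This confirms that the asymptotic radius $R$ is governed by $s$ and $\epsilon_w$, which completes the corollary.
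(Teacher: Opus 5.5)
Your core mechanism is the paper's. The induction driven by Assumption~\ref{assum:lambda} keeps the error supported on $\Omega$, and RIP from Assumption~\ref{assum:sketch} (for both $\bm{A}$ and $\bm{SA}$) makes the restricted map $\bm{I}-\eta_t\bm{A}_t^\mathrm{T}\bm{A}_t$ a contraction. The difference is where you apply it. The paper works with the one-step inequality \eqref{eq:h_bound_t} and simply declares that affine map a contraction for $0<\eta_t<2/\sigma_{\max}^2(\bm{A}_t)$. You work with the unrolled bound of Theorem~\ref{thm:error_bound}.

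Your version is more precise in two ways. First, it makes explicit that the paper's spectral-radius sentence can only refer to the restricted operator, since the full $\bm{W}_t$ has eigenvalue $1$ on $\ker\bm{A}_t$. Second, $\eta_t=1$ gives an explicit uniform rate $\delta_{2s}$ and an explicit floor $R$, which is exactly the ``convergence constant'' $e$ of the next corollary.

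The cost is that $R$ inherits the appendix step that replaced the factors $\prod_{t''}\|\bm{I}-\eta_{t''}\bm{A}_{t''}^\mathrm{T}\bm{A}_{t''}\|_\Omega$ multiplying the noise and threshold terms by $1$. Any uniform rate $\rho<1$ forces $\eta_t\ge(1-\rho)/(1+\delta_{2s})$, hence $\tilde{\eta}_C+\tilde{\eta}_D\ge T(1-\rho)/(1+\delta_{2s})$. So for $\epsilon_w>0$ your $R$ grows linearly with the horizon $T$. Your conclusion should therefore be read at fixed $T$ for $t\le T$, and the $\limsup_t$ and ``$t$-independent'' remarks should be dropped. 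The paper's one-step route, completed by keeping those factors, gives $h^{t+1}\le\rho^t h^1+\max_{t'}b_{t'}/(1-\rho)$ with $b_{t'}=\sqrt{s}(\eta_{t'}\max\{C,D\}\epsilon_w+\lambda_{t'})$. For bounded $\lambda_t$ this is a horizon-free neighborhood, which is closer to what the corollary asserts.

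Two smaller fixes. First, only $\eta_t=1$ gives the factor $\delta_{2s}$; for example, $\eta_t=1/2$ gives $(1+\delta_{2s})/2$. A uniform rate for other choices therefore needs $\eta_t$ confined to a compact subinterval of $(0,2/(1+\delta_{2s}))$. Second, your $\tilde{\lambda}$ check is unnecessary, because Assumption~\ref{assum:eta} already fixes $\tilde{\lambda}$. It is also circular: $\|\bm{x}^{(t)}-\bm{x}^\star\|_1$ only decays to a level that itself contains $\sqrt{s}\tilde{\lambda}$, and closing that loop would need a smallness condition on $s\tilde{\mu}$ and $s\tilde{\xi}$ that the assumptions do not supply.
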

\begin{proof}
  The estimated error bound of the proposed DU-PSISTA can be written as in \eqref{eq:h_bound_t} and 
  the error vector $\bm{x}^{(t)}-\bm{x}^\star$ is at most $2s$-sparse.
  The error dynamics are governed by the linear operator $\bm{W}_t = \bm{I} - \eta_t \bm{A}_t^\mathrm{T}\bm{A}_t$.
  By Assumption~\ref{assum:sketch} and setting $0<\eta_t<2/\sigma_{\max}^2(\bm{A}_t)$, where $\sigma_{\max}^2(\bm{A}_t)$ is the largest singular value of $\bm{A}_t$, 
  the spectral radius of $\bm{W}_t$ is restricted to be less than 1.
  This ensures that the mapping in \eqref{eq:h_bound_t} is a contraction.
  It should be noted that, for the matrix $\bm{A}_t=\bm{A}$, the discussion reduces to that of standard compressed sensing \cite{candes2005decoding}, 
  and for the matrix $\bm{A}_t=\bm{SA}$, the RIP is preserved with high probability when the sketching matrix $\bm{S}$ satisfies JL lemma \cite{baraniuk2008simple,krahmer2011new}.
  Gaussian sketch is the typical random projection that satisfies JL lemma.
\end{proof}
\begin{corollary}
  The estimated error introduced by linear sketching in DU-PSISTA has possibility to be compensated for by adjusting the learnable parameters.
\end{corollary}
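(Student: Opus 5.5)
The plan is to read the corollary off the bound \eqref{eq:theorem} of Theorem~\ref{thm:error_bound}. I will show that the learnable parameters $\{\eta_t,\lambda_t\}$ can always be chosen so that the terms that depend on the sketch are made as small as desired, while the contraction supplied by the OGU layers is kept. Since the statement only asserts a \emph{possibility}, it suffices to exhibit such a parameter choice that satisfies Assumptions~\ref{assum:lambda} and \ref{assum:eta}. The resulting bound should then differ from the one obtained with original updates only by an arbitrarily small amount.

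First I will separate out where sketching enters \eqref{eq:theorem}. It appears in three places:
\begin{itemize}
\item the factors $\|\bm{I}-\eta_{t'}(\bm{SA})^\mathrm{T}\bm{SA}\|$ for SGU layers;
\item the noise term $\sqrt{s}\epsilon_w\tilde{\eta}_D D$, where $D=\max_{i,j}|(\bm{S}^\mathrm{T}\bm{SA})_{ij}|$ is typically larger than $C$;
\item the part of $\tilde{\lambda}$ coming from SGU layers, whose lower bound in Assumption~\ref{assum:lambda} involves $\tilde{\xi}$ and $D$ instead of $\tilde{\mu}$ and $C$.
\end{itemize}
As in the proof of the first corollary, I will interpret the norms as restricted to $2s$-sparse vectors, since the error is at most $2s$-sparse. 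Assumption~\ref{assum:sketch} then gives, for $0<\eta_t<2/(1+\delta_{2s})$,
\[
\|\bm{I}-\eta_t\bm{A}_t^\mathrm{T}\bm{A}_t\| \le \max\{|1-\eta_t(1-\delta_{2s})|,\,|1-\eta_t(1+\delta_{2s})|\} \le 1 .
\]
This holds for both $\bm{A}_t=\bm{A}$ and $\bm{A}_t=\bm{SA}$. So no layer expands the error, and OGU layers with $\eta_t\approx 1$ contract by a factor of about $\delta_{2s}$.

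Next I will make the compensating choice. On SGU layers take $\eta_t=\varepsilon$ small, and set $\lambda_t$ equal to its lower bound $\varepsilon(\tilde{\xi}\|\bm{x}^{(t)}-\bm{x}^\star\|_1+D\epsilon_w)$. Then:
\begin{itemize}
\item the SGU contribution to $\tilde{\eta}_D D$ is $O(\varepsilon)$;
\item the SGU contribution to $\tilde{\lambda}$ is $O(\varepsilon)$, because $\lambda_t$ is proportional to $\eta_t$ and $\|\bm{x}^{(t)}-\bm{x}^\star\|_1\le\sqrt{2s}\|\bm{x}^{(t)}-\bm{x}^\star\|$ is bounded inductively by the theorem itself;
\item the SGU contraction factor is at most $1$.
\end{itemize}
The OGU layers keep their step sizes, so the product in \eqref{eq:theorem} is still bounded by roughly $\delta_{2s}^{N_{\text{ISTA}}}$. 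Letting $\varepsilon\to0$ recovers the bound of a purely OGU schedule. This shows that the accuracy degradation caused by sketching can be removed by parameter adjustment. Between $\varepsilon=0$ and the usual step size there is a continuum in which SGU layers add contraction at the cost of a larger residual. I will argue that training can locate this trade-off, and that the learned parameters in Section~\ref{sec:experiments} are expected to reflect it.

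The main obstacle is that $\|\bm{I}-\eta\bm{A}_t^\mathrm{T}\bm{A}_t\|$ in the full operator norm equals $1$ for underdetermined $\bm{A}_t$, because of the nullspace. The argument therefore has to be phrased with norms restricted to $2s$-sparse error vectors, which requires that the iterates keep their support in $\Omega$. I would try to get this from Assumption~\ref{assum:lambda}, exactly as in the appendix proof of Theorem~\ref{thm:error_bound}, and then invoke RIP for $\bm{SA}$ through the JL property of $\bm{S}$.
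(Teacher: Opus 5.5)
Your argument works within the framework of Theorem~\ref{thm:error_bound}, but it reaches the corollary by a different route than the paper.

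\textbf{The paper's route.} It compares only the additive constants. It sets $e=\sqrt{s}\epsilon_w(\tilde{\eta}_C C+\tilde{\eta}_D D)+\sqrt{s}\tilde{\lambda}$ against the constant $e'=\sqrt{s}\epsilon_w\eta' C+\sqrt{s}\lambda'$ of a $T$-layer DU-ISTA. It then notes that $e-e'$ vanishes when $\tilde{\lambda}\approx\lambda'$ and $\tilde{\eta}_C\approx\eta'-\tilde{\eta}_D D/C$. In words, the $D$-penalty of \emph{active} sketched layers is offset by shrinking the step mass on the OGU layers. The product term is never examined.

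\textbf{Your route.} You send the SGU step sizes and thresholds to zero. If you also set the OGU sums to $\eta'$ and $\lambda'$, this is exactly the endpoint $\tilde{\eta}_D=0$ of the paper's family.

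\textbf{What your route buys.} It is concrete: you give an explicit parameter choice checked against Assumptions~\ref{assum:lambda} and \ref{assum:eta}. You also treat the contraction factors correctly, through RIP with the $\Omega$-restricted norm. Your observation that the unrestricted $\|\bm{I}-\eta_t\bm{A}_t^\mathrm{T}\bm{A}_t\|$ is at least $1$ for underdetermined $\bm{A}_t$ is a point the paper's first corollary glosses over.

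\textbf{What it costs.} The compensation works by making the sketched layers near-identity maps, so DU-PSISTA collapses to a DU-ISTA with only $N_{\text{ISTA}}$ effective layers. Your baseline is therefore that shortened schedule, not the $T$-layer DU-ISTA the paper and the experiments compare against. Relative to the $T$-layer DU-ISTA, the sketching loss has only moved from the additive constant into the product. That product is roughly $\delta_{2s}^{N_{\text{ISTA}}}$ rather than $\delta_{2s}^{T}$ at comparable OGU step sizes. Your ``continuum'' remark acknowledges this but does not resolve it.

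There is also a tension if you try to land exactly on the paper's condition. Concentrating the mass $\eta'$ on $N_{\text{ISTA}}$ layers may push individual OGU steps above $2/(1+\delta_{2s})$, which would break the contraction you established.

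The paper's version is heuristic, but it keeps $\tilde{\eta}_D>0$. That is what supports the $C/D$ discussion in Corollaries~\ref{corr:delay} and \ref{corr:improve}. It is also what matches the learned step sizes, which are smaller but nonzero on the SGU layers.
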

\begin{proof}
  Let 
  \begin{equation}
    e=\sqrt{s}\epsilon_w(\tilde{\eta}_C C+\tilde{\eta}_D D)+\sqrt{s}\tilde{\lambda}
  \end{equation}
  in \eqref{eq:theorem} denotes the convergence constant of the DU-PSISTA.
  When assuming no sketch is used (named as DU-ISTA), let the learnable parameters be $\eta_t',\lambda_t'$, and set $\sum_{t=1}^T \eta_t'=\eta', \ \sum_{t=1}^T \lambda_t'=\lambda'$. 
  The convergence constant in this case is 
  \begin{equation}
    e'=\sqrt{s}\epsilon_w\eta'C+\sqrt{s}\lambda'.
  \end{equation}
  The difference in the error of DU-PSISTA from that of DU-ISTA is 
  \begin{equation}
    e-e'=\sqrt{s}\epsilon_w\left( (\tilde{\eta}_C-\eta')C+\tilde{\eta}_D D\right)+\sqrt{s}(\tilde{\lambda}-\lambda').
  \end{equation}
  and it is desired to be improved as much as possible by controlling the learnable parameters of DU-PSISTA.
  For $\lambda_t$, if the proposed DU-PSISTA can learn such that $\tilde{\lambda}\approx\lambda'$, the difference $\sqrt{s}(\tilde{\lambda}-\lambda')$ can be reduced to zero.
  Using sketching inevitably introduces a term $\tilde{\eta}_D D$, but if we can learn such that $(\tilde{\eta}_C-\eta')C\approx-\tilde{\eta}_D D$, i.e., $\tilde{\eta}_C\approx\eta'-\tilde{\eta}_D\frac{D}{C}$, 
  we can reduce the term $(\tilde{\eta}_C-\eta')C+\tilde{\eta}_D D$ to zero
  as long as Assumption~\ref{assum:eta} holds.
\end{proof}
\begin{corollary}
  When relying solely on sketching in the proposed DU-PSISTA, convergence will be delayed.
  \label{corr:delay}
\end{corollary}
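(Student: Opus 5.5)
The plan is to compare the bound of Theorem~\ref{thm:error_bound} in two cases: DU-PSISTA with period $P\le T$, and the degenerate configuration that uses only sketched updates (every $\bm{A}_t=\bm{SA}$, so $P>T$ and no OGU step occurs). The two pieces of the bound are the contraction product $\prod_{t'}\|\bm{I}-\eta_{t'}\bm{A}_{t'}^\mathrm{T}\bm{A}_{t'}\|$ and the additive constant $e$. I would argue that relying solely on sketching enlarges both.

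\textbf{Contraction factor.} As in the proof of the first corollary, the relevant norm is taken over $2s$-sparse error vectors. For such vectors the factor is controlled by the RIP constant of $\bm{A}_t$: with $\eta_t$ chosen suitably it is at most $\max\{|1-\eta_t(1-\delta_{2s}(\bm{A}_t))|,\,|1-\eta_t(1+\delta_{2s}(\bm{A}_t))|\}$. I would then invoke the JL-based RIP transfer \cite{baraniuk2008simple,krahmer2011new}. If $\bm{A}$ has RIP constant $\delta$ and $\bm{S}$ is an $\epsilon$-JL embedding on the relevant subspaces, then $\bm{SA}$ has RIP constant about $\delta+\epsilon+\delta\epsilon\ge\delta$. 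Since $l\ll m$ forces $\epsilon$ to be non-negligible, each sketched factor $\rho_D$ satisfies $\rho_D\ge\rho_C$, where $\rho_C$ is the corresponding factor for $\bm{A}$. Hence the product $\rho_D^{\,t}$ for the purely sketched scheme dominates the mixed product $\rho_C^{\,N_{\text{ISTA}}}\rho_D^{\,N_{\text{Sketch}}}$, and reaching a prescribed transient error $\varepsilon$ takes about $\log(1/\varepsilon)/\log(1/\rho_D)$ iterations rather than fewer. This is the claimed delay.

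\textbf{Residual constant.} With only sketching, $\tilde{\eta}_C=0$, so $e=\sqrt{s}\epsilon_w\tilde{\eta}_D D+\sqrt{s}\tilde{\lambda}$. The compensation mechanism of the second corollary, which adjusts $\tilde{\eta}_C$ to cancel the $\tilde{\eta}_D D$ term, is then unavailable. I would also argue that $D\ge C$ typically holds: the entries of $\bm{S}^\mathrm{T}\bm{SA}$ pick up extra random fluctuation from $\bm{S}^\mathrm{T}\bm{S}\neq\bm{I}$. Similarly $\tilde{\xi}\ge\tilde{\mu}$, so Assumption~\ref{assum:lambda} forces larger $\lambda_t$ and hence a larger $\tilde{\lambda}$. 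Both effects enlarge the neighborhood, and the contraction must run longer before the transient falls below it.

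\textbf{Main obstacle.} The hard step is making $\rho_D\ge\rho_C$ rigorous. RIP transfer only gives an upper bound on $\delta_{2s}(\bm{SA})$, so $\bm{SA}$ could in principle be better conditioned than $\bm{A}$. To handle this, I would phrase the corollary in terms of the worst-case bound of Theorem~\ref{thm:error_bound}, or in terms of expectation over Gaussian $\bm{S}$. In the Gaussian case $\mathbb{E}[\bm{S}^\mathrm{T}\bm{S}]=\bm{I}$ together with Jensen-type convexity of the spectral norm gives $\mathbb{E}\|\bm{I}-\eta\bm{A}^\mathrm{T}\bm{S}^\mathrm{T}\bm{SA}\|\ge\|\bm{I}-\eta\bm{A}^\mathrm{T}\bm{A}\|$ (restricted to sparse supports), which yields the claim in expectation.
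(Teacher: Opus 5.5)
Your route differs from the paper's, and neither of your fallbacks secures the step that is supposed to carry the delay. The paper never compares the contraction factors of $\bm{A}$ and $\bm{SA}$. It compares only the residual constants, $e-e'=\sqrt{s}\epsilon_w(\bar{\eta}D-\eta'C)+\sqrt{s}(\bar{\lambda}-\lambda')$, and it has a reason to avoid the product. Assumption~\ref{assum:sketch} gives $\bm{A}$ and $\bm{SA}$ one common $\delta_{2s}$, so at equal step sizes the worst-case bounds on $\|\bm{I}-\eta_t\bm{A}_t^\mathrm{T}\bm{A}_t\|_\Omega$ in Theorem~\ref{thm:error_bound} are the same for OGU and SGU steps. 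Your worst-case fallback therefore gives no delay at all unless you drop that assumption and carry separate constants. Even then, $\delta+\epsilon+\delta\epsilon$ is only an upper bound, which is the obstacle you already flagged.

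Your Jensen fallback is correct, but it compares the two operators only at a common fixed $\eta$. The $\eta_t$ are learned separately in each scheme, so you need two further steps. First, note that the learned steps do not depend on the realized $\bm{S}$; this holds here because $\bm{S}$ is resampled during training. Second, use $\|\bm{I}-\eta_t\bm{A}^\mathrm{T}\bm{A}\|_\Omega\geq\min_{\nu}\|\bm{I}-\nu\bm{A}^\mathrm{T}\bm{A}\|_\Omega$. You cannot instead apply Jensen after a per-sketch $\min_\eta$, because the optimal factor $(\kappa-1)/(\kappa+1)$, with $\kappa$ the restricted condition number, is not convex in the Gram matrix. Also, with $\bm{S}$ frozen across iterations and the $\eta_t$ varying, the per-factor inequality does not automatically pass to the expected product, since a product of convex functions need not be convex. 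What survives is ``pure sketching is not faster in expectation,'' a weak inequality rather than a delay.

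The residual paragraph is where the paper's actual idea is missing. Losing the $\tilde{\eta}_C$ lever does not make compensation impossible: with no OGU step, the residual can still be matched to DU-ISTA's by learning $\bar{\eta}\approx\eta' C/D$. Your claim that the neighborhood is enlarged silently keeps the step sizes comparable to DU-ISTA's. In that case it is a statement about accuracy, not speed. Moreover, a larger neighborhood is entered sooner, because the transient term needs fewer iterations to fall below a larger $e$. So your closing sentence runs the wrong way unless you fix a target accuracy independently of $e$.

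The paper turns $D>C$ into a delay precisely through this trade-off. With $\bm{Q}=\bm{S}^\mathrm{T}\bm{S}$, the entries of $\bm{Q}\bm{A}$ have variance about $1+m/l$, the excess coming from $\sum_{k\neq i}Q_{ik}A_{kj}$. Hence $C/D\approx\sqrt{l/(l+m)}$, about $0.58$ for $l=m/2$. Matching the floor forces the total step size down by that factor, and smaller steps push the restricted factors $\|\bm{I}-\eta_t\bm{A}_t^\mathrm{T}\bm{A}_t\|_\Omega$ toward $1$. Adding this link would also join your two halves: the inequality $D>C$, which you use only to enlarge $e$, is exactly what drives the slower contraction.
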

\begin{proof}
  When all iterations are treated as sketches and no original gradient update is introduced into DU-PSISTA (named DU-Sketched ISTA), 
  let $\bar{\eta}=\sum_{t=1}^T \eta_t$ and $\bar{\lambda}=\sum_{t=1}^T\lambda_t$.
  The convergence constant of DU-Sketched ISTA is $e=\sqrt{s}\epsilon_w\bar{\eta}D+\sqrt{s}\bar{\lambda}$ and 
  the difference in error from the error of DU-ISTA becomes 
  \begin{equation}
    e-e'=\sqrt{s}\epsilon_w(\bar{\eta}D-\eta'C)+\sqrt{s}(\bar{\lambda}-\lambda').
  \end{equation}
  If $\bar{\eta}\approx\eta'\cdot C/D$ can be learned, the difference can be reduced to zero. 
  However, in practice, even after learning, the performance of DU-Sketched ISTA is expected to be very poor.
  % showing a significant divergence from DU-ISTA.
  The reason lies in the fact that the maximum component value of $\bm{S}^\mathrm{T}\bm{SA}$ (i.e., $D$ rather than $C$) can be larger than that of the original matrix $\bm{A}$.
  Assuming the components of matrix $\bm{A}$ follows a distribution with mean 0 and variance 1, 
  the variance of $\bm{S}^\mathrm{T}\bm{SA}$ becomes $1+m/l$ due to the effect of $\sum_{k \neq i} Q_{ik} A_{kj}$ when setting $\bm{Q}=\bm{S}^\mathrm{T}\bm{S}$. 
  Since the maximum value is proportional to the standard deviation, we can say $C \propto 1$ and $D \propto \sqrt{1 + m/l}$. Therefore,
  \begin{equation}
    \frac{C}{D}\approx\sqrt{\frac{l}{l+m}}.
    \label{eq:cd}
  \end{equation}
  For example, when $l=m/2$, we have $C/D\approx\frac{1}{\sqrt{3}}\approx0.577$. 
  That is, the total step size $\bar{\eta}\approx\eta'\frac{C}{D}$ must be smaller than that in DU-ISTA $\eta'$ by a factor of 0.5 on average. 
  Consequently, convergence slows down.
\end{proof}
\begin{corollary}
  The introduction of the original gradient updates in the proposed DU-PSISTA has ability to improve the convergence compared to that relying solely on sketching.
  \label{corr:improve}
\end{corollary}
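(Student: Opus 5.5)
The argument compares the convergence constant $e$ of Theorem~\ref{thm:error_bound} for DU-PSISTA with the constant of DU-Sketched ISTA, in the same way as the proof of Corollary~\ref{corr:delay}. I would fix a total step-size budget, i.e. the sum of all $\eta_t$ needed to obtain a prescribed contraction factor $\prod_{t'}\|\bm{I}-\eta_{t'}\bm{A}_{t'}^\mathrm{T}\bm{A}_{t'}\|$ in the first term of \eqref{eq:theorem}. I would then show that, for the same budget, the noise part of $e$ is smaller for DU-PSISTA, or equivalently that for the same noise part DU-PSISTA can use a larger total step size and hence contracts faster.

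\textbf{Step 1: split the step sizes by update type.} For DU-PSISTA, I would write the total step size as $\tilde{\eta}_C+\tilde{\eta}_D$, where $\tilde{\eta}_C$ collects the iterations with $(t-1)\bmod P=0$ and $\tilde{\eta}_D$ the rest. Assumption~\ref{assum:eta} gives the noise part of $e$ as $\sqrt{s}\epsilon_w(\tilde{\eta}_C C+\tilde{\eta}_D D)$. For DU-Sketched ISTA, every iteration is of SGU type, so the noise part is $\sqrt{s}\epsilon_w\bar{\eta}D$.

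\textbf{Step 2: compare at equal total step size.} Set $\bar{\eta}=\tilde{\eta}_C+\tilde{\eta}_D$. The difference of the two noise parts is then
\begin{equation*}
\sqrt{s}\epsilon_w\,\tilde{\eta}_C\,(D-C).
\end{equation*}
Using the heuristic behind \eqref{eq:cd}, namely $C/D\approx\sqrt{l/(l+m)}<1$, we have $D>C$. So every unit of step size moved onto an OGU iteration lowers the error floor by a factor of roughly $C/D$. The $\lambda$ part, $\sqrt{s}\tilde{\lambda}$, can be matched between the two schemes by choosing the thresholds appropriately.

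\textbf{Step 3: convert this into faster convergence.} Conversely, fix the error floor at the DU-ISTA level $e'$. Corollary~\ref{corr:delay} showed that DU-Sketched ISTA then needs $\bar{\eta}\approx\eta' C/D$, which is a strictly reduced total step size. DU-PSISTA instead only needs $\tilde{\eta}_C\approx\eta'-\tilde{\eta}_D D/C$, as in the previous corollary. Its total step size is $\eta'-\tilde{\eta}_D(D/C-1)$, which is at least $\eta' C/D$ whenever $\tilde{\eta}_C\ge 0$. A larger admissible total step size, with each $\eta_t$ kept below $2/\sigma_{\max}^2(\bm{A}_t)$, makes the factors $\|\bm{I}-\eta_t\bm{A}_t^\mathrm{T}\bm{A}_t\|$ restricted to $2s$-sparse vectors smaller under Assumption~\ref{assum:sketch}. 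The product term therefore decays faster.

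\textbf{Main obstacle.} The hardest step is the last link: showing that a larger sum of step sizes really gives a smaller product of operator norms. The map $\eta\mapsto\|\bm{I}-\eta\bm{A}_t^\mathrm{T}\bm{A}_t\|$ is decreasing only for $\eta\le 2/(\sigma_{\min}^2+\sigma_{\max}^2)$. I would therefore restrict attention to step sizes in that regime. On the restricted (RIP) set the factor is then bounded by $1-\eta_t(1-\delta_{2s})$, so the product is bounded by $\exp(-(1-\delta_{2s})\sum_t\eta_t)$. This bound is monotone in the total step size, which closes the comparison. Because the whole argument rests on the heuristic $C<D$ and on bounds rather than exact errors, I would state the conclusion as an ability to improve convergence, not as a strict guarantee.
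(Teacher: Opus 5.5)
Your proposal is correct and follows essentially the paper's route. The paper also compares the convergence constant of Theorem~\ref{thm:error_bound} with DU-ISTA, notes that matching it gives $\tilde{\eta}_D\approx\frac{C}{D}(\eta'-\tilde{\eta}_C)$, so the $C/D$ penalty falls only on the sketched share of the step size, and concludes that taking $\tilde{\eta}_C$ near $\eta'$ (larger steps at OGU iterations) avoids the slowdown of Corollary~\ref{corr:delay}; your equal-budget comparison, the bound $\eta'-\tilde{\eta}_D(D/C-1)\ge\eta' C/D$, and the RIP estimate $\prod_t(1-\eta_t(1-\delta_{2s}))\le\exp(-(1-\delta_{2s})\sum_t\eta_t)$ make explicit the step from ``larger total step size'' to ``faster contraction'' that the paper leaves implicit, at the same heuristic level (comparing upper bounds and assuming $C<D$).
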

\begin{proof}
  Even when introducing the original gradient update in DU-PSISTA, 
  the factor $C/D$ also appears in $\tilde{\eta}_D \approx \frac{C}{D}(\eta' - \tilde{\eta}_C)$.
  % the same term affects the result (where $\tilde{\eta}_D \approx \frac{C}{D}(\eta' - \tilde{\eta}_C)$). 
  However, treating $\tilde{\eta}_C$ as equivalent to $\eta'$ in DU-ISTA can reduce this effect. 
  Therefore, it is important to include the original gradient update in DU-PSISTA. 
  Furthermore, when leanable parameters in DU-PSISTA are learned properly, 
  the step sizes for iterations $(t-1)\bmod P=0$ (using original gradient update) are expected to take relatively larger values to make $\tilde{\eta}_C$ as close as possible to $\eta'$ 
  and the step sizes for iterations $(t-1)\bmod P\neq0$ (using sketched gradient update) are expected to take smaller values.
\end{proof}

\section{Numerical Experiments}
\label{sec:experiments}
\subsection{Experimental Settings}
In this section, we evaluate the performance of the proposed DU-PSISTA through numerical experiments.
% We describe the experimental settings of the computer experiments. 
% We compare the proposed DU-PSISTA with the conventional DU-ISTA.
We compare the performance of the proposed DU-PSISTA with 
the standard ISTA \eqref{eq:ogu} \eqref{eq:slambda}, DU-ISTA (with high computational complexity due to not including sketching), 
sketched ISTA (not including deep unfolding), DU-Sketched ISTA (sketched ISTA with deep unfolding).
The observation matrix $\bm{A} \in \mathbb{R}^{m \times n}$ 
was generated with elements drawn from normal distribution with mean 0 and variance 1.
The Gaussian sketch was used as the sketching matrix $\bm{S} \in \mathbb{R}^{l \times m}$ 
unless otherwise specified.
Each element of the original signal $\bm{x} \in \mathbb{R}^n$ follows i.i.d. Bernoulli-Gaussian distribution 
where it is 0 with probability 0.95 and follows $\mathcal{N}(0,1)$ with probability 0.05.
{We set the noise variance to $\sigma^2 = 0.01$}. 
The performance was evaluated with large size system $(n, m) = (1024, 512)$ and small size system $(256, 128)$.

The computational complexity of PSISTA for each system is calculated from \eqref{eq:comp} and summarized in Tables~\ref{keisanryou1} and~\ref{keisanryou2}. 
From the tables, it can be observed that the computational complexity decreases as the period $P$ increases and the sketch size $l$ decreases.
Considering both the theoretical computational complexity and the algorithmic performance evaluated later, 
we assess the performance trade-offs of the proposed method.
\begin{table}[t]
\centering
\caption{Total computational complexity $C_{\text{PSISTA}}$ for each sketch size $l$ and period $P$ for large system ($n=1024$, $m=512$, $T=40$)}
\begin{tabular}{c|rrrr}
\hline
$P \backslash l$ & 256           & 128           & 64            & 32            \\
\hline
1 $(C_{\text{ISTA}})$               & 83,988,480    & 83,988,480    & 83,988,480    & 83,988,480    \\
                 & (100.0\%)     & (100.0\%)     & (100.0\%)     & (100.0\%)     \\
2                & 63,022,080    & 52,538,880    & 47,297,280    & 44,676,480    \\
                 & (75.0\%)      & (62.6\%)      & (56.3\%)      & (53.2\%)      \\
3                & 56,732,160    & 43,104,000    & 36,289,920    & 32,882,880    \\
                 & (67.5\%)      & (51.3\%)      & (43.2\%)      & (39.2\%)      \\
5                & 50,442,240    & 33,669,120    & 25,282,560    & 21,089,280    \\
                 & (60.1\%)      & (40.1\%)      & (30.1\%)      & (25.1\%)      \\
8                & 47,297,280    & 28,951,680    & 19,778,880    & 15,192,480    \\
                 & (56.3\%)      & (34.5\%)      & (23.5\%)      & (18.1\%)      \\
\hline
\end{tabular}
\label{keisanryou1}
\end{table}
\begin{table}[t]
\centering
\caption{Total computational complexity $C_{\text{PSISTA}}$ for each sketch size $l$ and period $P$ for small system ($n=256$, $m=128$, $T=40$)}
\begin{tabular}{c|rrrr}
\hline
$P \backslash l$ & 64           & 32           & 16           & 8            \\
\hline
1 $(C_{\text{ISTA}})$                & 5,268,480    & 5,268,480    & 5,268,480    & 5,268,480    \\
                 & (100.0\%)    & (100.0\%)    & (100.0\%)    & (100.0\%)    \\
2                & 3,959,040    & 3,304,320    & 2,976,960    & 2,813,280    \\
                 & (75.1\%)     & (62.7\%)     & (56.5\%)     & (53.4\%)     \\
3                & 3,566,208    & 2,715,072    & 2,289,504    & 2,076,720    \\
                 & (67.7\%)     & (51.5\%)     & (43.5\%)     & (39.4\%)     \\
5                & 3,173,376    & 2,125,824    & 1,602,048    & 1,340,160    \\
                 & (60.2\%)     & (40.3\%)     & (30.4\%)     & (25.4\%)     \\
8                & 2,976,960    & 1,831,200    & 1,258,320    &   971,880    \\
                 & (56.5\%)     & (34.8\%)     & (23.9\%)     & (18.4\%)     \\
\hline
\end{tabular}
\label{keisanryou2}
\end{table}

The step size $\eta$ and thresholding parameter $\lambda$ {of ISTA} were both set to $1/\lambda_{\max}(\bm{A}^{\mathrm{T}}\bm{A})$. 
The initial values for the learnable parameters $\eta_t$ and $\lambda_t$ were also set to this value.
Performance was evaluated by MSE (mean squared error) over 50 different $\bm{A}$,$\bm{S}$ pairs, each with 50 samples of signals $(\bm{x}, \bm{y})$. 
% ensuring robust results across varied conditions.  
% This experimental setup was designed to verify that the proposed method achieves stable and high-precision estimation under 
% a variety of conditions, rather than depending on a single case. 
We used incremental learning~\cite{ito2019trainable} and mini-batch learning for deep unfolding. 
The mini-batch size and the number of inner loop iterations were set to 
% $B = 50$ and $I = 50$, 
$50$. 
{In each inner loop, the matrices $\bm{S}$ and $\bm{A}$ were regenerated, and mini-batches were constructed accordingly for training. }
The mean squared loss function was used. Parameter updates were performed using the Adam optimizer.

\subsection{Verification of the Effect of Period }
\begin{figure}[t]
  \centering
  \includegraphics[width=1\linewidth]{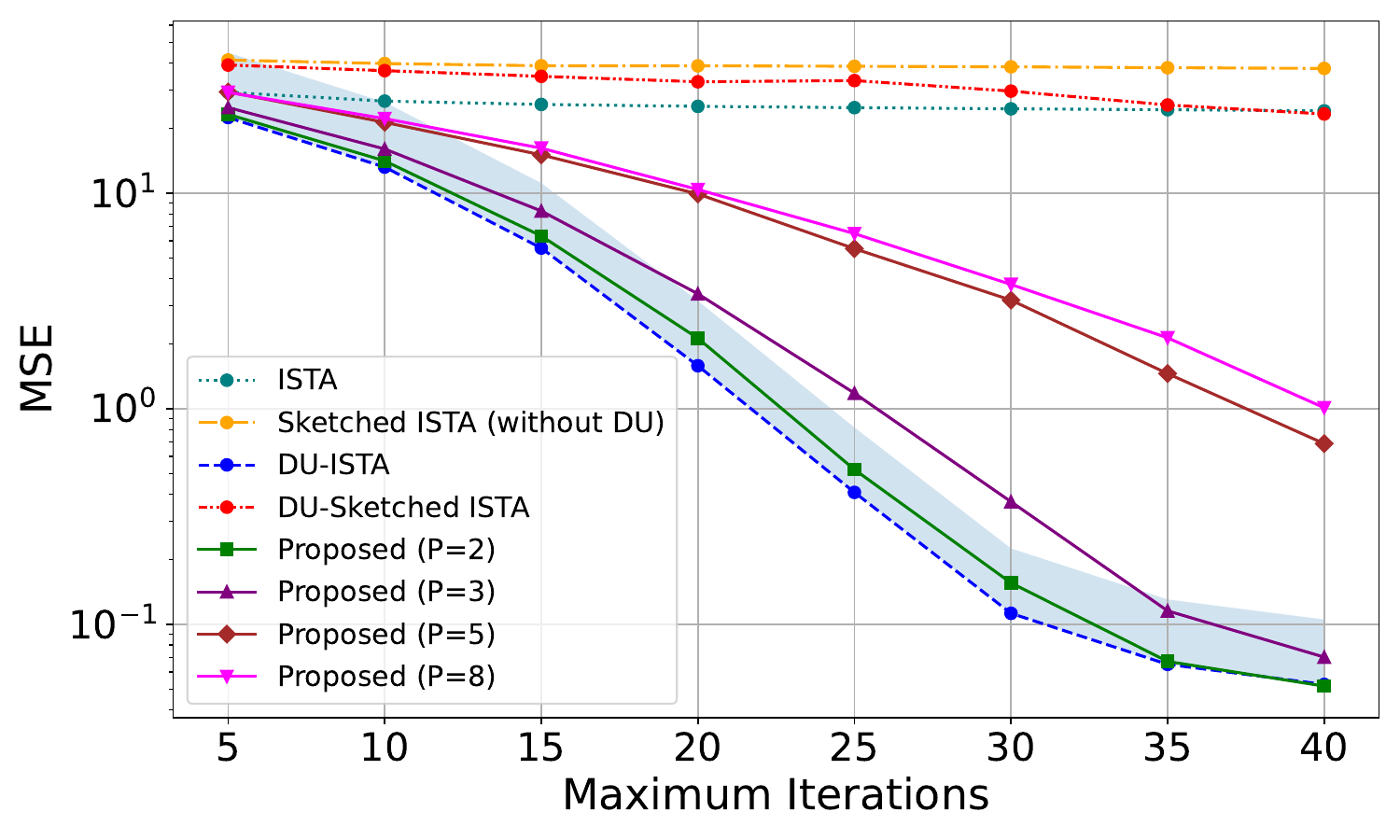}
  \caption{Comparison of MSE performance for different period $P$ in {DU-PSISTA} for large system ($n=1024, m=512, l=256$).}
  \label{jikken1}
\end{figure}
\begin{figure}[t]
  \centering
  \includegraphics[width=1\linewidth]{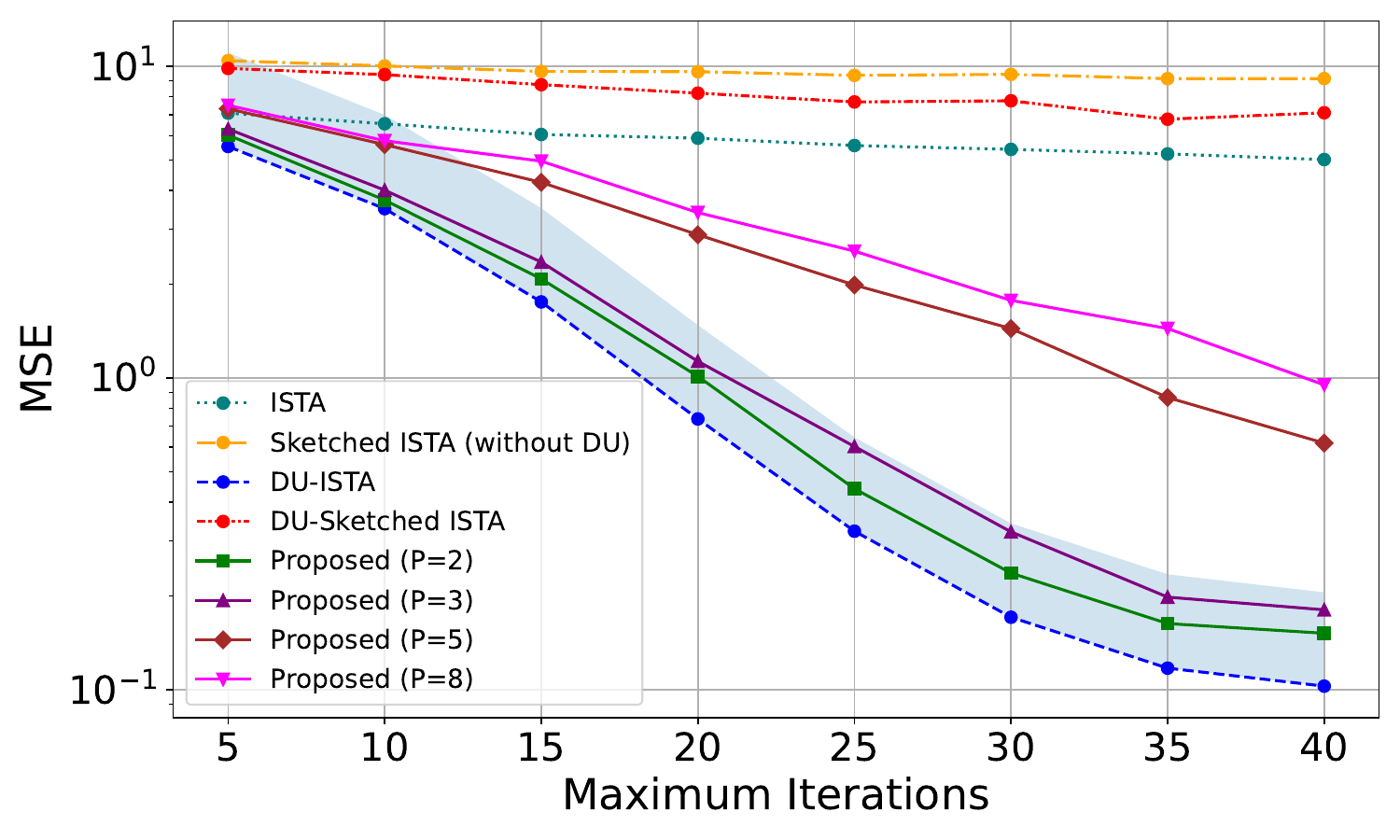}
  \caption{Comparison of MSE performance for different period $P$ in {DU-PSISTA} for small system ($n=256, m=128, l=64$).}
  \label{jikken2}
\end{figure}
In this section, we investigate the impact of varying the period $P$ in DU-PSISTA on recovery performance. 
% The experiments were conducted for two system sizes, where $(m, n, l) = (1024, 512, 256)$ and $(256, 128, 64)$.
We set $l=256, 64$ for the large and small systems, respectively.
The period $P$ was varied among $2$, $3$, $5$, and $8$ to evaluate its impact on recovery performance.

Figures~\ref{jikken1} and~\ref{jikken2} present the MSE performance of each system with varying the period $P$.
In each figure,
% the change in recovery accuracy for the larger system size. 
the blue line is the baseline DU-ISTA 
and the light blue area above the line indicates the range within {2} times the MSE of DU-ISTA.  
As shown in the figures, 
the recovery accuracy deteriorates with increasing $P$, and for larger $P$, the MSE exceeds {2} times that of DU-ISTA, 
which is considered to be due to the accumulation of approximation errors caused by more frequent use of the sketching matrix. 
On the other hand, when $P=2$, DU-PSISTA achieves accuracy comparable to DU-ISTA. 
These results indicate a trade-off between 
computational efficiency and estimation accuracy, and suggest that a smaller period $P$ is preferable for maintaining high recovery performance.
% Next, Figure~\ref{jikken2} presents the recovery accuracy for the reduced system size. Table~\ref{keisanryou2} summarizes the 
% changes in computational complexity for this case. Similarly, the performance degrades as $P$ increases, which is considered 
% to be due to the accumulation of approximation errors caused by more frequent use of the sketching matrix.

In both system sizes, DU-PSISTA demonstrated excellent performance in reducing MSE and achieving fast convergence when the 
period $P$ was set to a small value, such as 2 or 3. 
% Even in these cases, computational complexity was reduced compared to DU-ISTA,
Specifically, considering the computational complexity listed in Tables~\ref{keisanryou1} and~\ref{keisanryou2}, 
the proposed method can reduce the computational complexity to 68--75\% while keeping MSE within or near the range of ${2}$ times of that of DU-ISTA.
It indicates 
that the proposed method achieves a good balance between efficiency and accuracy.

\subsection{Verification of the Effect of {Sketch Size} }
\begin{figure}[t]
  \centering
  \includegraphics[width=1\linewidth]{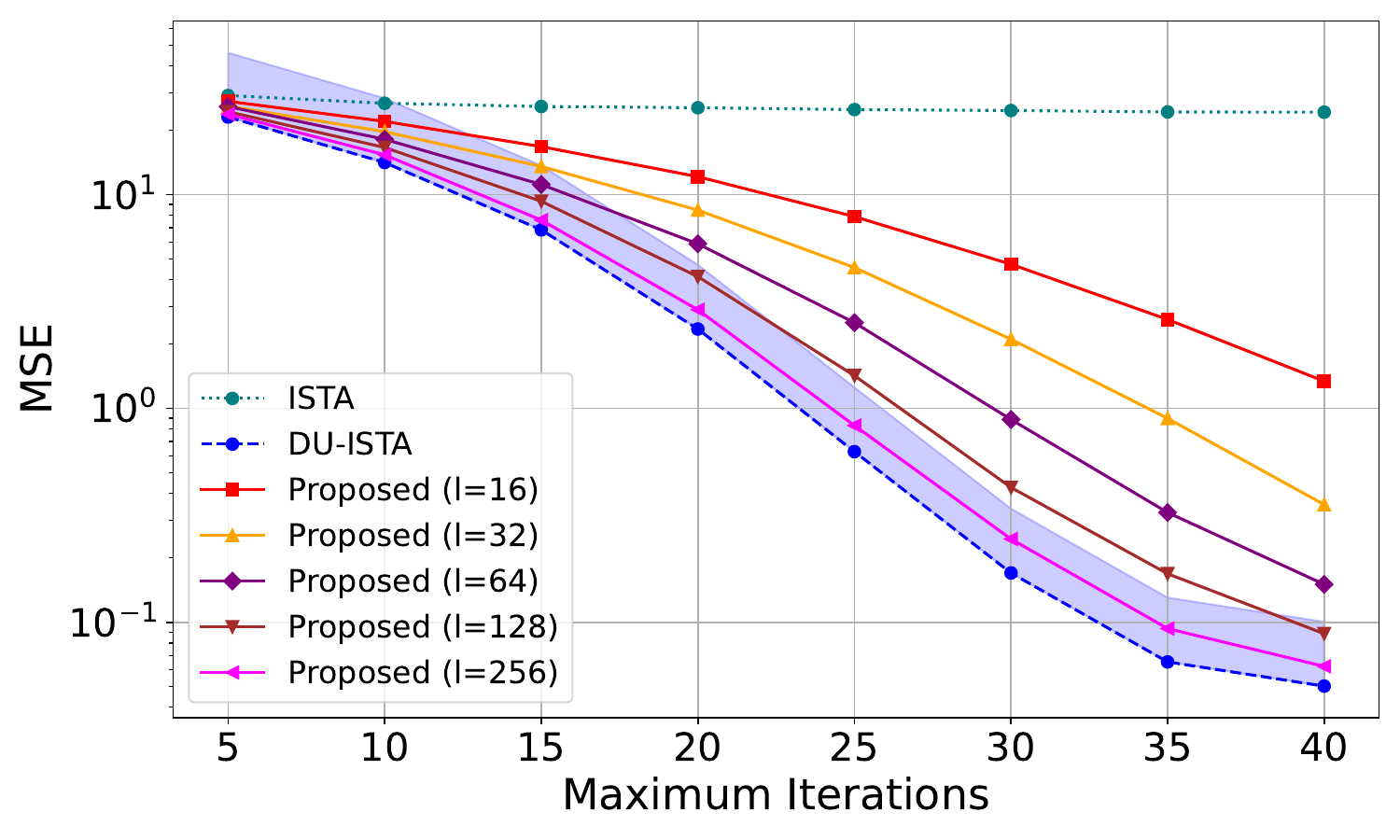}
  % \caption{Comparison of MSE performance for different numbers of rows $l$ after transformation in {DU-PSISTA} ($n=1024, m=512, P=2$)}
  \caption{Comparison of MSE performance for different sketch size $l$ in {DU-PSISTA} for large system ($n=1024, m=512, P=2$).}
  \label{jikken3}
\end{figure}
\begin{figure}[t]
  \centering
  \includegraphics[width=1\linewidth]{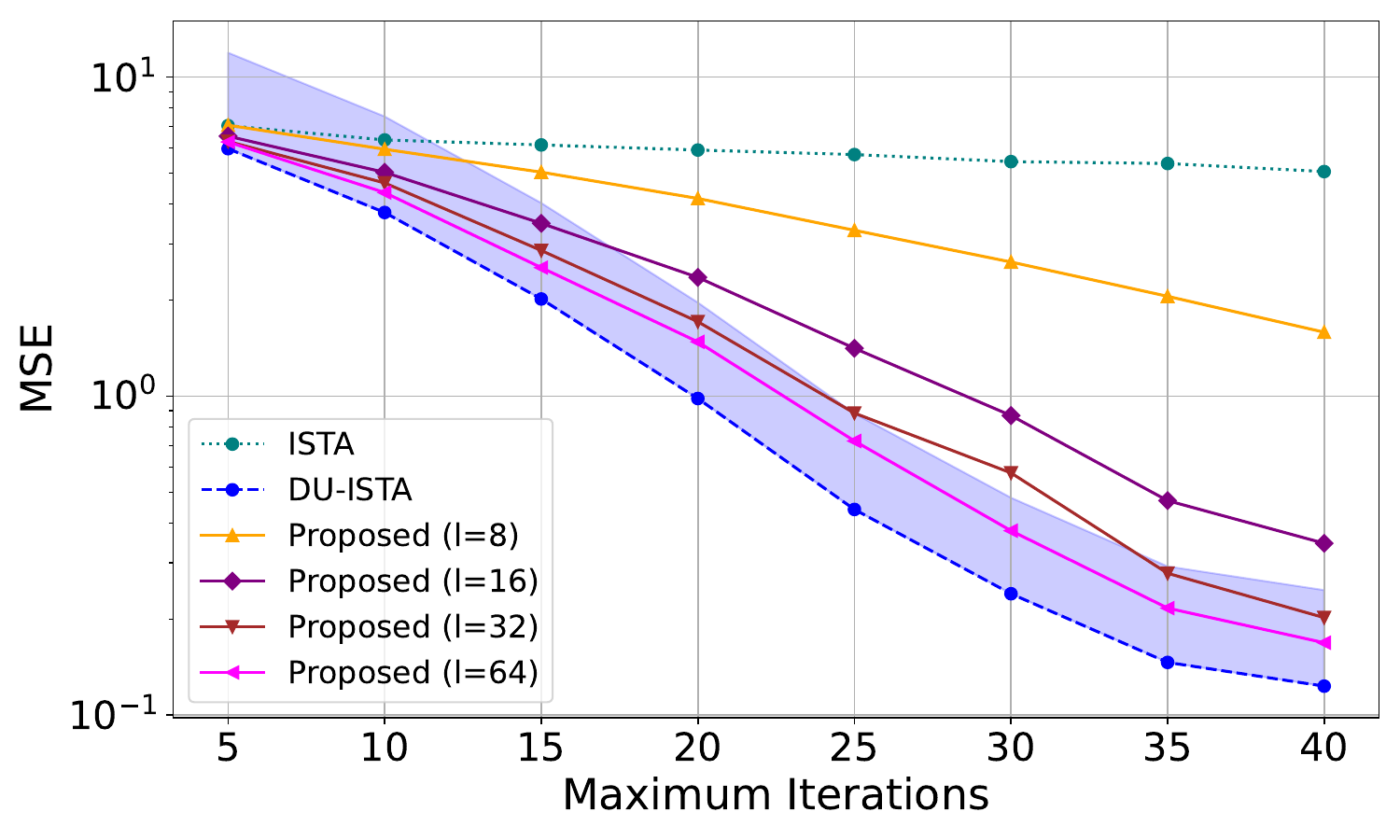}
  % \caption{Comparison of MSE performance for different numbers of rows $l$ after transformation in {DU-PSISTA} ($n=256, m=128, P=2$)}
  \caption{Comparison of MSE performance for different sketch size $l$ in {DU-PSISTA} for small system ($n=256, m=128, P=2$).}
  \label{jikken4}
\end{figure}
In this section, we investigate the impact of varying sketch size $l$ in DU-PSISTA on recovery performance. 
% The experiments were conducted for two system sizes, where $(m, n, P) = (1024, 512, 2)$ and $(256, 128, 2)$.
The sketch size $l$ was varied among $\{256, 128, 64, 32, 16\}$ for the large system and $\{64, 32, 16, 8\}$ for the small system.
We set $P=2$ for both systems.

Figures~\ref{jikken3} and~\ref{jikken4} show the recovery accuracy for the large and small systems, respectively. 
% The figures present the MSE performance of ISTA, DU-ISTA, and DU-PSISTA for each value of $l$. 
The blue line is the baseline DU-ISTA, and the blue area above the line indicates the range within {2} times the MSE of DU-ISTA. 
For the large system in Fig.~\ref{jikken3}, it can be observed that when $l = 256$, the MSE of DU-PSISTA remains within {2} times that of DU-ISTA across all iterations $T$. 
However, for $l = 128$ and smaller, the MSE increases as $l$ decreases,
indicating a degradation in recovery accuracy.
Similarly for the small system in Fig.~\ref{jikken4}, the case of $l = 64$ is within the 
range of {2} times the MSE of DU-ISTA but the other cases are not.

% Next, Figure~\ref{jikken4} shows the recovery accuracy for the smaller system size. Similarly, for $l = 64$, the MSE remains
% within 1.5 times that of DU-ISTA. However, when $l$ is reduced to $32$ or smaller, the MSE becomes noticeably higher, particularly as $T$ increases.

From these results, it is evident that reducing the sketch size $l$ leads to a decline in estimation
accuracy for both system sizes. This is likely because the reduced dimensionality of the feature space after sketching limits
the ability to sufficiently capture essential signal features. However, by varying $l$ from 256 to 128, 64, 32, and further to 16, 
the computational complexity is significantly reduced, resulting in improved computational efficiency. Therefore, 
the proposed DU-PSISTA demonstrates a good balance between accuracy and computational cost when the number of rows $ l $ is relatively large, 
such as 256 for the larger system and 64 for the smaller system.

\subsection{Execution Time}
We measured the execution time of the proposed DU-PSISTA with $(P,l)=(2,256)$ and compared it with the conventional DU-ISTA for the large system.
The execution time was measured on a machine with 
CPU: Intel Core i9-9900KF@3.60GHz, 
RAM: 16GB, and
GPU: NVIDIA GeForce RTX 2080 Ti.
As in Table \ref{keisanryou1}, 
the pre-comuputations such as the computation of $\bm{SA}$ and parameter learning in deep unfolding are not included in the execution time.

Figure \ref{executiontime} shows the execution time in seconds for the proposed DU-PSISTA and DU-ISTA 
when the maximum number $T$ of iterations was set to $20, 40, 60, 80, 100$, 
where the theoretical computational complexities per iteration are shown in \eqref{eq:comp} and \eqref{eq:comp_ista}, respectively.
The time was the average of 50 runs of each algorithm.
When $T=20$, the execution time of DU-PSISTA is about $0.08$ seconds larger than that of DU-ISTA.
One possible cause is the memory accesses incurred in the proposed method by switching between the original gradient descent and its sketch version at each iteration, 
i.e., by calling different operations.
However, for larger $T$, 
the execution time of DU-PSISTA is smaller than that of DU-ISTA.
For example, when $T=40$, the execution time of DU-PSISTA is $75\%$ of that of DU-ISTA.
This is consistent with the theoretical computational complexity shown in Table \ref{keisanryou1}.
Considering the results in Fig. \ref{jikken1}, 
the proposed method can actually achieve both high recovery accuracy and low computational cost.
\begin{figure}[t]
  \centering
  \includegraphics[width=1\linewidth]{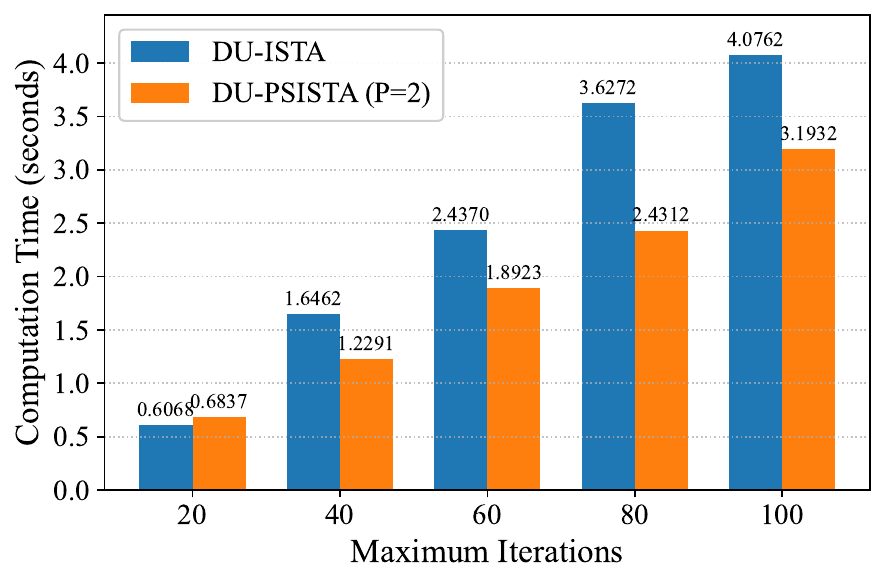}
  % \caption{Comparison of MSE performance for different numbers of rows $l$ after transformation in {DU-PSISTA} ($n=1024, m=512, P=2$)}
  \caption{Comparison of execution time for large system ($n=1024, m=512, P=2, l=256$).}
  \label{executiontime}
\end{figure}

\subsection{Performance Difference between Gaussian Sketch and Count Sketch}
\label{sec:exp_countsketch}
\begin{figure}[t]
  \centering
  \includegraphics[width=1\linewidth]{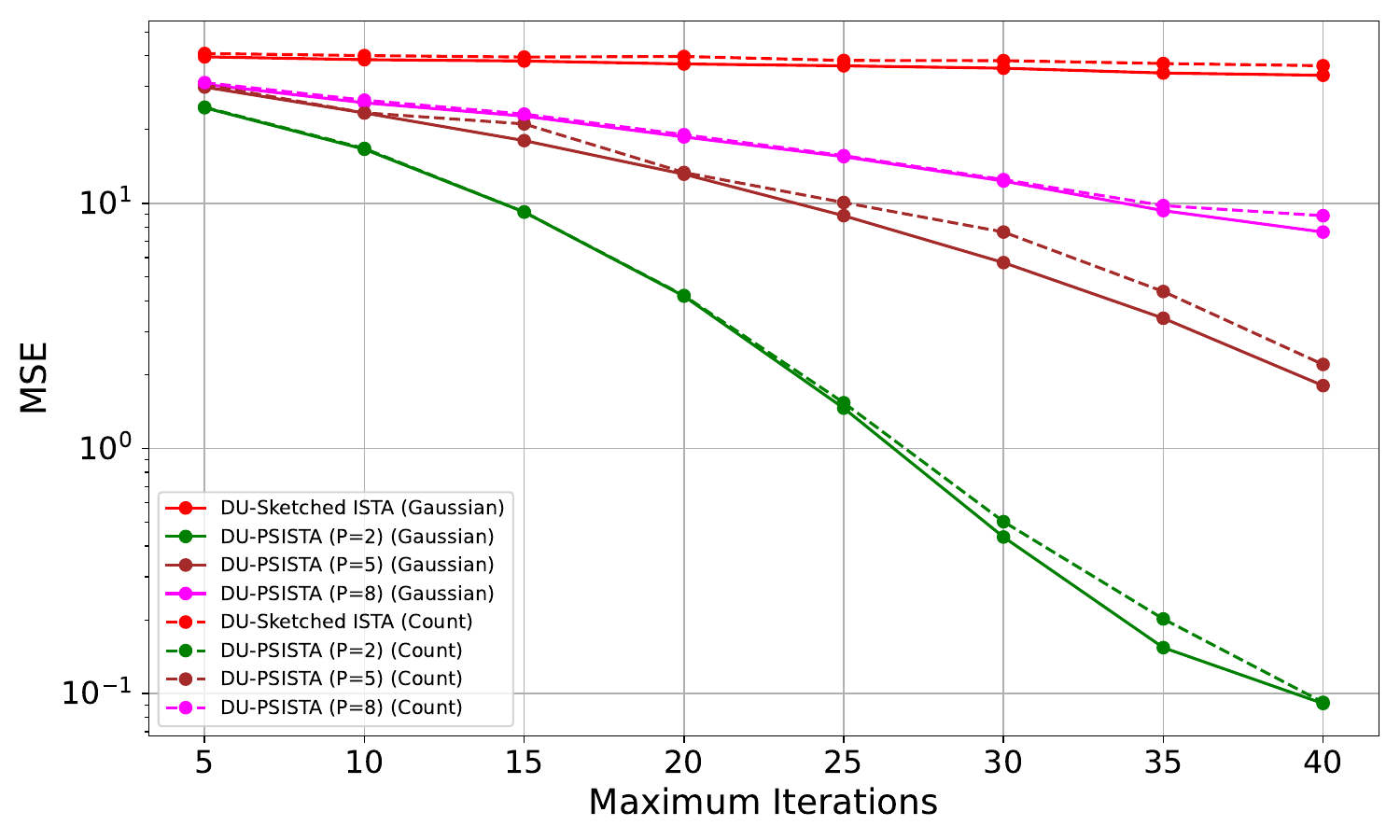}
  \caption{Comparison of MSE performance for Gaussian sketch and Count Sketch in {DU-PSISTA} for large system ($n=1024, m=512, l=256$).}
  \label{count}
\end{figure}
We investigate the impact of the choice of sketching matrix on the recovery performance. 
% The experiments were conducted for two system sizes, where $(m, n, P) = (1024, 512, 2)$ and $(256, 128, 2)$.
We used the Gaussian sketch and Count Sketch as the sketching matrices for the proposed DU-PSISTA and DU-Sketched ISTA, 
and set the large system $(n, m) = (1024, 512)$ with $l=256$.
The period $P$ was set to $2,5,8$, 
and the learning rates of deep unfolding for the algorithms with each period were set to $2.5\times10^{-5}, 10^{-5}, 4.0\times10^{-6}$, respectively.
The learning rate for DU-Sketched ISTA was set to $10^{-6}$.

Figure~\ref{count} shows the recovery accuracy for DU-Sketched ISTA and DU-PSISTA with Gaussian sketch and Count Sketch, respectively. 
In all cases, 
the recovery accuracy of algorithms using Gaussian sketch is better than that using Count Sketch 
but the improvement is not significant.
Therefore, in this system parameter setting, 
it is possible to design an algorithm that maintains equivalent performance 
while reducing the pre-computational load on the $\bm{SA}$ through Count Sketch.
This indicates that the proposed method has flexibility in the choice of sketching matrix 
according to the system and the application requirements.

% \begin{figure}[t]
%   \centering
%   \includegraphics[width=1\linewidth]{3-6_executiontime_count.pdf}
%   % \caption{Comparison of MSE performance for different numbers of rows $l$ after transformation in {DU-PSISTA} ($n=256, m=128, P=2$)}
%   \caption{Comparison of execution time for different sketching matrix in {DU-PSISTA} for small system ($n=256, m=128, P=2$).}
%   \label{executiontime_count}
% \end{figure}

% Next, Figure~\ref{jikken4} shows the recovery accuracy for the smaller system size. Similarly, for $l = 64$, the MSE remains
% within 1.5 times that of DU-ISTA. However, when $l$ is reduced to $32$ or smaller, the MSE becomes noticeably higher, particularly as $T$ increases.

\subsection{Learned Parameters}
\begin{figure}[t]
  \centering
  \includegraphics[width=1\linewidth]{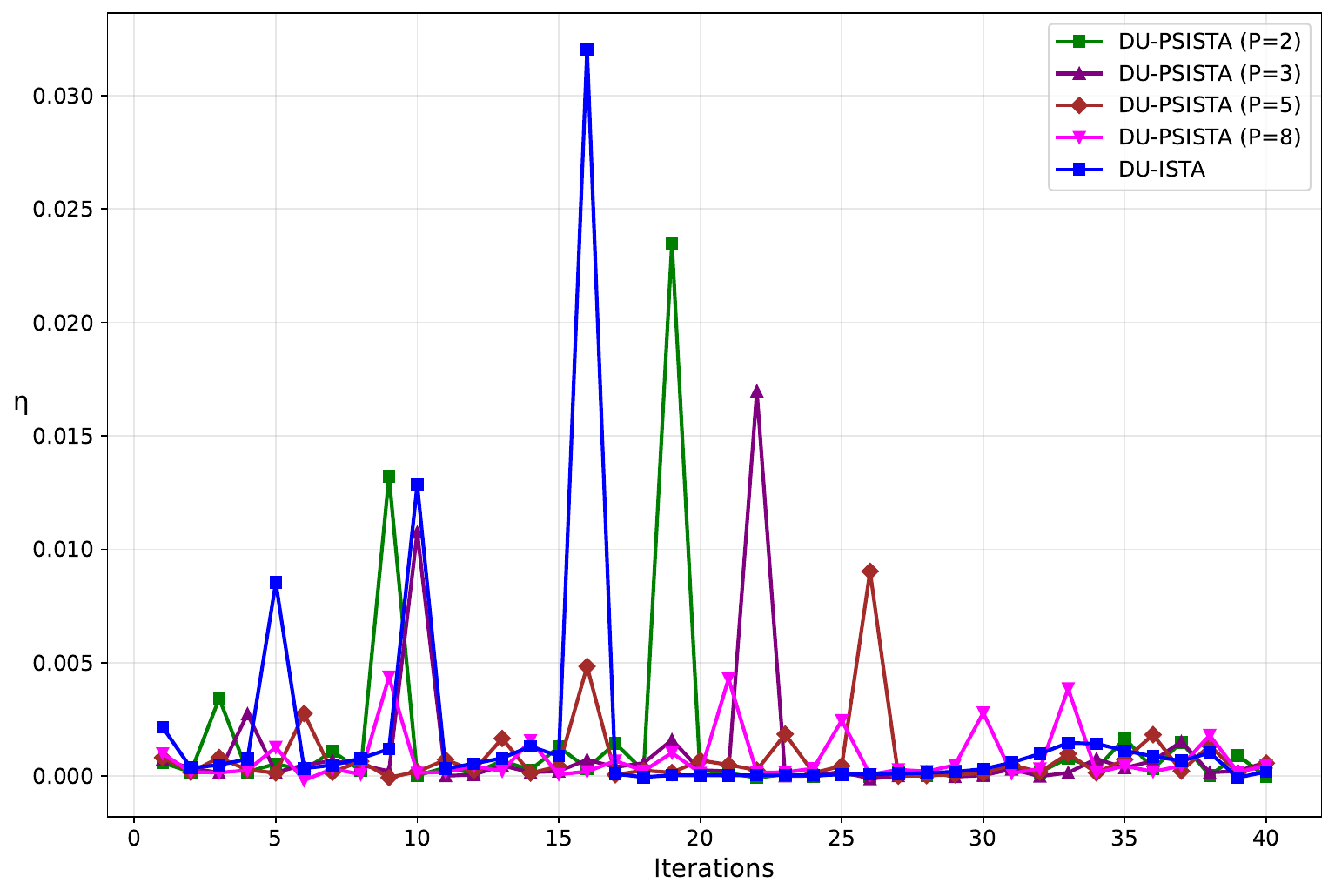}
  \caption{Learned step sizes $\eta_t$ for different period $P$ in {DU-PSISTA} for large system ($n=1024, m=512, l=256$).}
  \label{p_large_etas}
\end{figure}
\begin{figure}[t]
  \centering
  \includegraphics[width=1\linewidth]{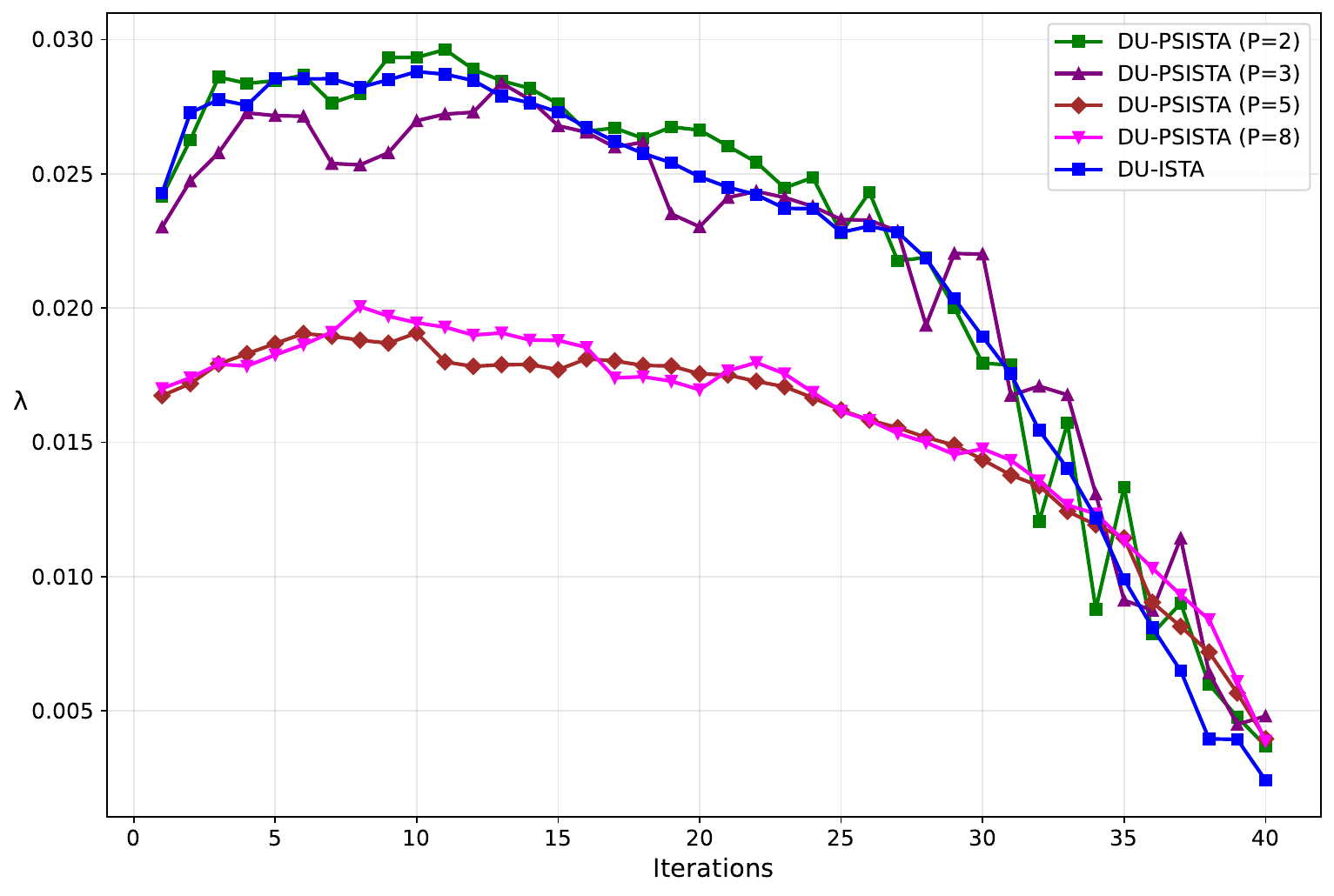}
  \caption{Learned thresholding parameters $\lambda_t$ for different period $P$ in {DU-PSISTA} for large system ($n=1024, m=512, l=256$).}
  \label{p_large_lambdas}
\end{figure}
Finally, we observe the values of learned parameters in the proposal DU-PSISTA 
and verify their consistency with the convergence analysis.

Figures \ref{p_large_etas} and \ref{p_large_lambdas} show the examples of the learned step sizes $\eta_t$ and thresholding parameters $\lambda_t$ for the large system with $l=256$ and different periods $P$, respectively.
From Fig.~\ref{p_large_etas}, it can be observed that 
the step sizes for each period take larger values when $(t-1)\bmod P=0$ than when $(t-1)\bmod P\neq0$.
For example, when $P=5$, it tends to take on large values when it is a multiple of 5 plus 1.
Figure~\ref{p_large_lambdas} indicates that 
$\lambda_t$ for $P=2, 3$ take values quite close to those of DU-ISTA's $\lambda'$.
We can see that, from Fig.~\ref{jikken1}, 
DU-PSISTA with $P=2, 3$ achieves recovery accuracy comparable to DU-ISTA.
These results are consistent with the discussion in the proof of Corollary~\ref{corr:improve}: 
the step sizes for iterations $(t-1)\bmod P=0$ take larger values 
and those for iterations $(t-1)\bmod P\neq0$ take smaller values
to reduce the difference in error between the proposed DU-PSISTA and DU-ISTA.
Moreover, the statistic $\tilde{\lambda}$ should be satisfy $\tilde{\lambda}\approx\lambda'$ to achieve the same recovery accuracy as DU-ISTA.
This is consistent with the discussion in \eqref{eq:cd} 
that the larger $l$ is, the larger $C/D$ becomes, allowing for a larger step size.
The values of $\lambda_t$ for $l=32, 64$ are relatively close to those of DU-ISTA's $\lambda'$, 
indicating the consistency with the results in Fig.~\ref{jikken4} that the proposed method can achieve recovery accuracy comparable to DU-ISTA when $l=32, 64$.
% In addition, the actual learned parameter values tend to have smaller step sizes when using sketching, 
% which is consistent with Collorary~\ref{corr:delay}.

\begin{figure}[t]
  \centering
  \includegraphics[width=1\linewidth]{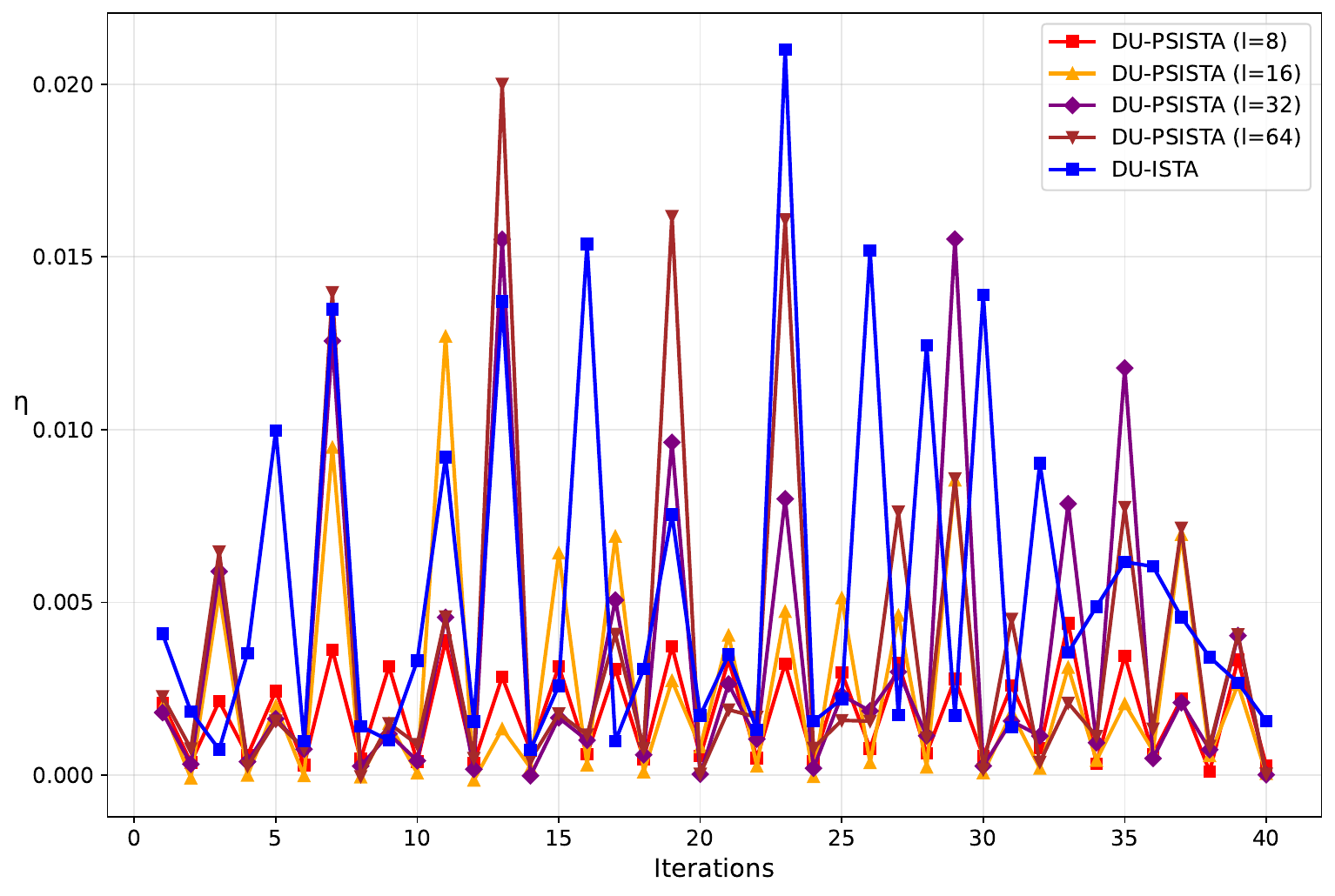}
  \caption{Learned step sizes $\eta_t$ for different sketch size $l$ in {DU-PSISTA} for small system ($n=256, m=128, P=2$).}
  \label{l_small_etas}
\end{figure}
\begin{figure}[t]
  \centering
  \includegraphics[width=1\linewidth]{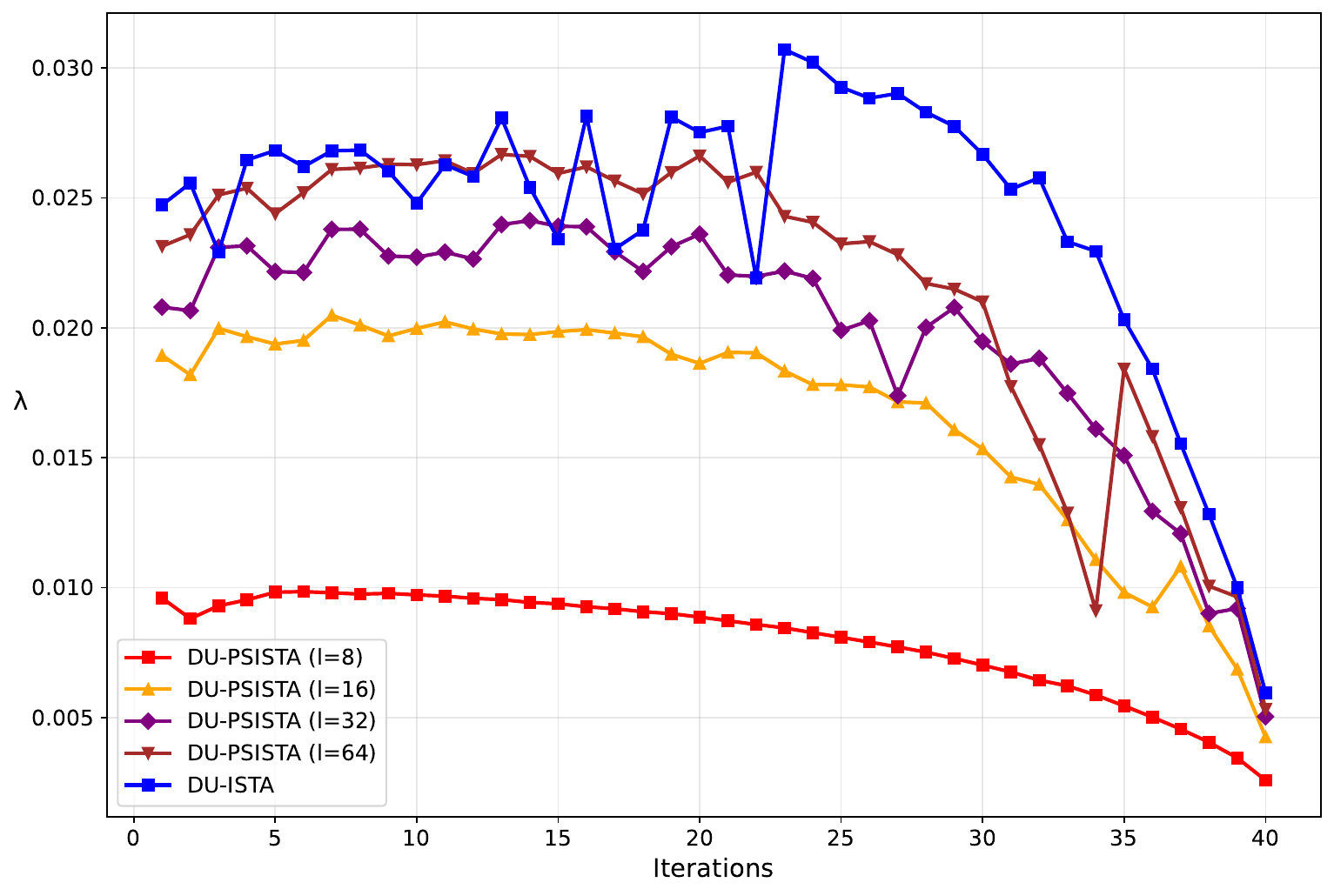}
  \caption{Learned thresholding parameters $\lambda_t$ for different sketch size $l$ in {DU-PSISTA} for small system ($n=256, m=128, P=2$).}
  \label{l_small_lambdas}
\end{figure}
Figures \ref{l_small_etas} and \ref{l_small_lambdas} show the examples of the learned step sizes $\eta_t$ and thresholding parameters $\lambda_t$ for the small system with $P=2$ and different sketch size $l$, respectively.
As discussed above, 
the magnitude of the step size value in each iteration tends to depend on the choice of $P$. 
Therefore, in this case ($P=2$), it can be observed that for any value of $l$, the step sizes tend to take on larger values in the same iteration.
Moreover, we can see that, in each iteration, 
the magnitude for larger $l$ tends to be larger than that for smaller $l$.

In summary, there is no contradiction between the conditions for the learnable parameters of DU-PSISTA revealed by convergence analysis 
and the values of the parameters actually learned in the computer experiments. 
Therefore, the validity of the convergence analysis is supported.

\section{Conclusions}
\label{sec:conclusion}
In this paper, we provided DU-PSISTA as a method aiming to achieve both computational efficiency and estimation accuracy for the recovery problem of high-dimensional and sparse signals.
The proposed algorithm is characterized by the ability to reduce computational cost while maintaining accuracy by periodically switching between original  and sketched gradient update.
Furthermore, by applying deep unfolding to this method, we extended it so that step sizes and 
thresholding parameters in each iteration can be learned, enabling data-driven optimization of hyperparameters that previously had to be manually adjusted.
We showed that the proposed method achieves a linear-type contraction to a neighborhood of the true sparse signal 
with properly selected parameters.
The analysis provides an explanation for the effectiveness of 
mixing original and sketched gradient updates to improve recovery accuracy.

The effectiveness of the proposed DU-PSISTA was verified through computer experiments conducted under various system sizes, sketch sizes, and period settings.
As a result, it was confirmed that, especially with settings that appropriately balance original and sketched gradient updates, 
it is possible to reduce computational complexity by 32 \% while maintaining recovery performance comparable to the conventional DU-ISTA.
The learned parameters in the experiments also show consistency with the theoretical analysis, supporting the validity of the convergence analysis.

% As future work, we would like to aim for further improvement in computational efficiency and recovery accuracy by incorporating other structured sketching matrices.
The periodical application of sketching and introduction of deep unfolding can be combined with other sparse signal recovery algorithms such as FISTA.
As future work, 
the introduction of architecture search to the proposed method 
is one of the other challenging directions for improving performance.
We used periodic structure for the proposed method in this paper, 
but it is possible to introduce other mixing schedules such as adaptive or dynamic scheduling.
It has possibility to adaptively adjust the computational complexity and recovery accuracy 
according to the system environment and recovery accuracy requirements.
Furthermore, the adaptive control of parameters or schedules instead of pre-computation of deep unfolding 
may be another challenging direction for enhancing practicality.

\appendix

% \section{Derivation of Theorem~\ref{thm:error_bound}}
The detailed proof of Theorem~\ref{thm:error_bound} is provided in the following.
We use the following notation that is introduced in \cite{chen2018theoretical}:
\begin{definition}
  A set $\partial \ell_1(\bm{x}) \ (\bm{x}=[x_1,\ldots, x_n]^\mathrm{T})$ is defined as below 
  as a subgradient of $\|\bm{x}\|_1$:
  \begin{equation}
    \partial \ell_1(x_i)=\begin{cases}            \{\mathrm{sign}(x_i)\}, & \mathrm{if} \ x_i\neq0, \\            [-1,1], & \mathrm{if}~x_i=0.        \end{cases}
  \end{equation}
  \label{def:subgradient}
\end{definition}

First, we show that $x_i^{(t)}=0$ for $\forall i\notin\Omega $ (where $x_i^\star=0$) by induction.
Let $x_i^{(0)}=0, \ \forall \ i\notin\Omega $. 
Then, it holds for $t=0$.
We then assume that $x_i^{(t)}=0, \ \forall \ i\notin\Omega $. 
From the update equation \eqref{eq:update}, 
the components $x_i^{(t+1)} \ (i=1,\ldots,n)$ of $\bm{x}^{(t+1)}$ are expressed as
\begin{align}
  x_i^{(t+1)} &= S_{\lambda_t}(x_i^{(t)}-\eta_t\bm{A}_{t,i}^\mathrm{T}\bm{A}_t(\bm{x}^{(t)}-\bm{x}^\star) + \eta_t\tilde{\bm{A}}_{t,i}^\mathrm{T}\bm{w})\\
  &= S_{\lambda_t}(-\eta_t\sum_{j\in\Omega }\bm{A}_{t,i}^\mathrm{T}\bm{A}_{t,j}(x_j^{(t)}-x_j^\star) + \eta_t\tilde{\bm{A}}_{t,i}^\mathrm{T}\bm{w}).
\end{align}
Here, we use the fact that $\forall i'\notin \Omega $, $x_{i'}^{(t)}-x_{i'}^\star=0-0=0$. 
By Assumption \ref{assum:lambda}, we have $x_i^{(t+1)}=0$.
Therefore, by induction, we have $x_i^{(t)}=0$ for $\forall i\notin\Omega  \ (x_i^\star=0)$.

Next, we show about the components $\forall i \in\Omega $ where $x_i^\star\neq0$.
The component $x_i^{(t+1)}$ can be written as follows:
$x_i^{(t+1)} = S_{\lambda_t}(x_i^{(t)}-\eta_t\bm{A}_{t,i}^\mathrm{T}(\bm{A}_t\bm{x}^{(t)}-\bm{y}_t)) \in x_i^{(t)}-\eta_t\bm{A}_{t,i}^\mathrm{T}(\bm{A}_t\bm{x}^{(t)}-\bm{y}_t) - \lambda_t\partial\ell_1(x_i^{(t+1)})$.
% \begin{align}
% x_i^{(t+1)} &= S_{\lambda_t}(x_i^{(t)}-\eta_t\bm{A}_{t,i}^\mathrm{T}(\bm{A}_t\bm{x}^{(t)}-\bm{y}_t))\\       & \in x_i^{(t)}-\eta_t\bm{A}_{t,i}^\mathrm{T}(\bm{A}_t\bm{x}^{(t)}-\bm{y}_t) - \lambda_t\partial\ell_1(x_i^{(t+1)}).
% \end{align}
The vector $\bm{x}^{(t+1)}$ can be written by summarizing the components as follows:
\begin{align}
\bm{x}^{(t+1)}&=       \bm{x}^{(t)}-\eta_t\bm{A}_t^\mathrm{T}(\bm{A}_t\bm{x}^{(t)}-\bm{y}_t)-\lambda_t\bm{p}^{t+1} \\ 
&= \bm{x}^{(t)}-\eta_t\bm{A}_t^\mathrm{T}\bm{A}_t\bm{x}^{(t)}+\eta_t\bm{A}_t^\mathrm{T}\bm{A}_t\bm{x}^\star \nonumber \\
&\quad+\eta_t\tilde{\bm{A}}_t^\mathrm{T}\bm{w} -\lambda_t\bm{p}^{t+1},
\end{align}
where $\bm{p}^{t+1}=[\bm{p}^{t+1}_1,\ldots,\bm{p}^{t+1}_n]^\mathrm{T}$ and 
each element satisfies $p^{t+1}_i\in\partial\ell_1(x_i^{(t+1)})$.

Consider the norm of the error between the true signal and the estimated signal.
Here, since we have shown that $x_i^{(t)}=0$ for $\forall i\notin\Omega $, 
it holds that $\|\bm{x}^{(t)}-\bm{x}^\star\|=\|\bm{x}^{(t)}-\bm{x}^\star\|_\Omega  \ \forall \ t$,
where $\|\cdot\|_\Omega $ is the $\ell_2$ norm that is calculated by taking only the components $i\in\Omega $.
By Definition \ref{def:subgradient}, it holds that $\|\bm{p}^t\|_\Omega \leq\sqrt{s} \ \forall \ t$.
We define $h^{t+1}=\|\bm{x}^{(t+1)}-\bm{x}^\star\|$ and it can be expanded as follows:
\begin{align}
h^{t+1}&=\|\bm{x}^{(t+1)}-\bm{x}^\star\|_\Omega  \nonumber\\ 
&= \|\bm{x}^{(t)}-\eta_t\bm{A}_t^\mathrm{T}\bm{A}_t\bm{x}^{(t)}+\eta_t\bm{A}_t^\mathrm{T}\bm{A}_t\bm{x}^\star \nonumber \\
&\quad+\eta_t\tilde{\bm{A}}_t^\mathrm{T}\bm{w} -\lambda_t\bm{p}^{t+1} - \bm{x}^\star\|_\Omega  \nonumber \\ 
&= \|(\bm{I}-\eta_t\bm{A}_t^\mathrm{T}\bm{A}_t)(\bm{x}^{(t)}-\bm{x}^\star)+\eta_t\tilde{\bm{A}}_t^\mathrm{T}\bm{w} -\lambda_t\bm{p}^{t+1}\|_\Omega  \nonumber \\ 
&\leq \|\bm{I}-\eta_t\bm{A}_t^\mathrm{T}\bm{A}_t\|_\Omega \|\bm{x}^{(t)}-\bm{x}^\star\|_\Omega  \nonumber \\
&\quad+\eta_t\|\tilde{\bm{A}}_t^\mathrm{T}\bm{w}\|_\Omega  +\lambda_t\|\bm{p}^{t+1}\|_\Omega  \nonumber \\ 
&\leq \|\bm{I}-\eta_t\bm{A}_t^\mathrm{T}\bm{A}_t\|_\Omega h^t+\eta_t\|\tilde{\bm{A}}_t^\mathrm{T}\bm{w}\|_\Omega  +\lambda_t\sqrt{s} \label{eq:h_bound_t} \\ 
&\leq \ldots \nonumber \\ 
&\leq \left(\prod_{t'=1}^t\|\bm{I}-\eta_{t'}\bm{A}_{t'}^\mathrm{T}\bm{A}_{t'}\|_\Omega \right)h^1 \nonumber \\
&\quad+ \sum_{t'=1}^t\left(\eta_{t'}\|\tilde{\bm{A}}_{t'}^\mathrm{T}\bm{w}\|_\Omega +\lambda_{t'}\sqrt{s}\right) \nonumber \\
&\quad\times\left(\prod_{t''=t'+1}^t\|\bm{I}-\eta_{t''}\bm{A}_{t''}^\mathrm{T}\bm{A}_{t''}\|_\Omega \right).
% &= \|\bm{x}^{(t)}-\eta_t\bm{A}_t^\mathrm{T}\bm{A}_t\bm{x}^{(t)}+\eta_t\bm{A}_t^\mathrm{T}\bm{A}_t\bm{x}^\star+\eta_t\tilde{\bm{A}}_t^\mathrm{T}\bm{w} -\lambda_t\bm{p}^{t+1} - \bm{x}^\star\|_\Omega  \\ = \|(\bm{I}-\eta_t\bm{A}_t^\mathrm{T}\bm{A}_t)(\bm{x}^{(t)}-\bm{x}^\star)+\eta_t\tilde{\bm{A}}_t^\mathrm{T}\bm{w} -\lambda_t\bm{p}^{t+1}\|_\Omega  \\ \leq \|\bm{I}-\eta_t\bm{A}_t^\mathrm{T}\bm{A}_t\|_\Omega \|\bm{x}^{(t)}-\bm{x}^\star\|_\Omega +\eta_t\|\tilde{\bm{A}}_t^\mathrm{T}\bm{w}\|_\Omega  +\lambda_t\|\bm{p}^{t+1}\|_\Omega  \\ \leq \|\bm{I}-\eta_t\bm{A}_t^\mathrm{T}\bm{A}_t\|_\Omega h^t+\eta_t\|\tilde{\bm{A}}_t^\mathrm{T}\bm{w}\|_\Omega  +\lambda_t\sqrt{s} \\ \leq \ldots \\ \leq \left(\prod_{t'=1}^t\|\bm{I}-\eta_{t'}\bm{A}_{t'}^\mathrm{T}\bm{A}_{t'}\|_\Omega \right)h^1 + \sum_{t'=1}^t\left(\eta_{t'}\|\tilde{\bm{A}}_{t'}^\mathrm{T}\bm{w}\|_\Omega +\lambda_{t'}\sqrt{s}\right)\left(\prod_{t''=t'+1}^t\|\bm{I}-\eta_{t''}\bm{A}_{t''}^\mathrm{T}\bm{A}_{t''}\|_\Omega \right)
\label{eq:h_bound}
\end{align}

We expand the second term in the right-hand side of \eqref{eq:h_bound}.
By Assumption \ref{assum:xw} and Definition \ref{def:max}, we have
\begin{align}
\|\tilde{\bm{A}}_{t'}^\mathrm{T}\bm{w}\|_\Omega  \leq \begin{cases}\sqrt{s}C\epsilon_w, & \mathrm{if} \ (t-1)\bmod{P}=0, \\ \sqrt{s}D\epsilon_w, & \mathrm{otherwise}.\end{cases}
\end{align}
The second term in the right-hand side of \eqref{eq:h_bound} can be expanded as 
\begin{align}
&\sum_{t'=1}^t\left(\eta_{t'}\|\tilde{\bm{A}}_{t'}^\mathrm{T}\bm{w}\|+\lambda_{t'}\sqrt{n}\right)\left(\prod_{t''=t'+1}^t\|\bm{I}-\eta_{t''}\bm{A}_{t''}^\mathrm{T}\bm{A}_{t''}\|\right) \nonumber\\ 
&<\sum_{t'\in[t], (t'-1)\bmod{P}=0}\left(\eta_{t'}\sqrt{s}C\epsilon_w+\lambda_{t'}\sqrt{s}\right) \nonumber \\
&\quad\times\prod_{t''=t'+1}^t\|\bm{I}-\eta_{t''}\bm{A}_{t''}^\mathrm{T}\bm{A}_{t''}\|_\Omega  \nonumber \\
&\quad+\sum_{t'\in[t], (t'-1)\bmod{P}\neq0}\left(\eta_{t'}\sqrt{s}D\epsilon_w+\lambda_{t'}\sqrt{s}\right) \nonumber \\
&\quad\times\prod_{t''=t'+1}^t\|\bm{I}-\eta_{t''}\bm{A}_{t''}^\mathrm{T}\bm{A}_{t''}\|_\Omega .
% &= \sqrt{s}C\sigma\left(\sum_{t'\in[t], t'\mod{P}=1}\eta_{t'}\right)+\sqrt{s}D\sigma\left(\sum_{t'\in[t], t'\mod{P}\neq1}\eta_{t'}\right)+\sqrt{s}\left(\sum_{t'\in[t]} \lambda_{t'}\right)\\ < \sqrt{s}C\sigma\tilde{\eta}_C+\sqrt{s}D\sigma\tilde{\eta}_D+\sqrt{s}\tilde{\lambda}
\end{align}
where $[t]$ is the set of all integers from $1$ to $t$.
By using Assumptions \ref{assum:lambda} and \ref{assum:eta}, we can further expand the second term in the right-hand side of \eqref{eq:h_bound} as follows:
\begin{align}
  &\sum_{t'=1}^t\left(\eta_{t'}\|\tilde{\bm{A}}_{t'}^\mathrm{T}\bm{w}\|+\lambda_{t'}\sqrt{n}\right)
  \left(\prod_{t''=t'+1}^t\|\bm{I}-\eta_{t''}\bm{A}_{t''}^\mathrm{T}\bm{A}_{t''}\|\right) \nonumber\\ 
  &<\sum_{t'\in[t], (t'-1)\bmod{P}=0}\left(\eta_{t'}\sqrt{s}C\epsilon_w+\lambda_{t'}\sqrt{s}\right) \nonumber \\
  &\quad+\sum_{t'\in[t], (t'-1)\bmod{P}\neq0}\left(\eta_{t'}\sqrt{s}D\epsilon_w+\lambda_{t'}\sqrt{s}\right) \nonumber\\ 
  &= \sqrt{s}C\epsilon_w\left(\sum_{t'\in[t], (t'-1)\bmod{P}=0}\eta_{t'}\right) \nonumber \\
  &\quad+\sqrt{s}D\epsilon_w\left(\sum_{t'\in[t], (t'-1)\bmod{P}\neq0}\eta_{t'}\right)+\sqrt{s}\left(\sum_{t'\in[t]} \lambda_{t'}\right)\nonumber \\ 
  &< \sqrt{s}C\epsilon_w\tilde{\eta}_C+\sqrt{s}D\epsilon_w\tilde{\eta}_D+\sqrt{s}\tilde{\lambda}.
  \label{eq:h_bound_expanded}
\end{align}
Therefore, Theorem~\ref{thm:error_bound} is proved.

\section*{Acknowledgments}
This work was supported by 
JSPS KAKENHI Grant-in-Aid for Young Scientists Grant Number JP23K13334 (to A. Nakai-Kasai) 
and Scientific Research(A) Grant Number JP25H01111 (to T. Wadayama).

\pagestyle{plain}
\printbibliography

% \section{Biography Section}
% If you have an EPS/PDF photo (graphicx package needed), extra braces are
%  needed around the contents of the optional argument to biography to prevent
%  the LaTeX parser from getting confused when it sees the complicated
%  $\backslash${\tt{includegraphics}} command within an optional argument. (You can create
%  your own custom macro containing the $\backslash${\tt{includegraphics}} command to make things
%  simpler here.)
 
% \vspace{11pt}

% \bf{If you include a photo:}\vspace{-33pt}
% \begin{IEEEbiography}[{\includegraphics[width=1in,height=1.25in,clip,keepaspectratio]{fig1}}]{Michael Shell}
% Use $\backslash${\tt{begin\{IEEEbiography\}}} and then for the 1st argument use $\backslash${\tt{includegraphics}} to declare and link the author photo.
% Use the author name as the 3rd argument followed by the biography text.
% \end{IEEEbiography}

% \vspace{11pt}

% \bf{If you will not include a photo:}\vspace{-33pt}
% \begin{IEEEbiographynophoto}{John Doe}
% Use $\backslash${\tt{begin\{IEEEbiographynophoto\}}} and the author name as the argument followed by the biography text.
% \end{IEEEbiographynophoto}

% \vspace{11pt}
\begin{IEEEbiography}[{\includegraphics[width=1in,height=1.25in,clip,keepaspectratio]{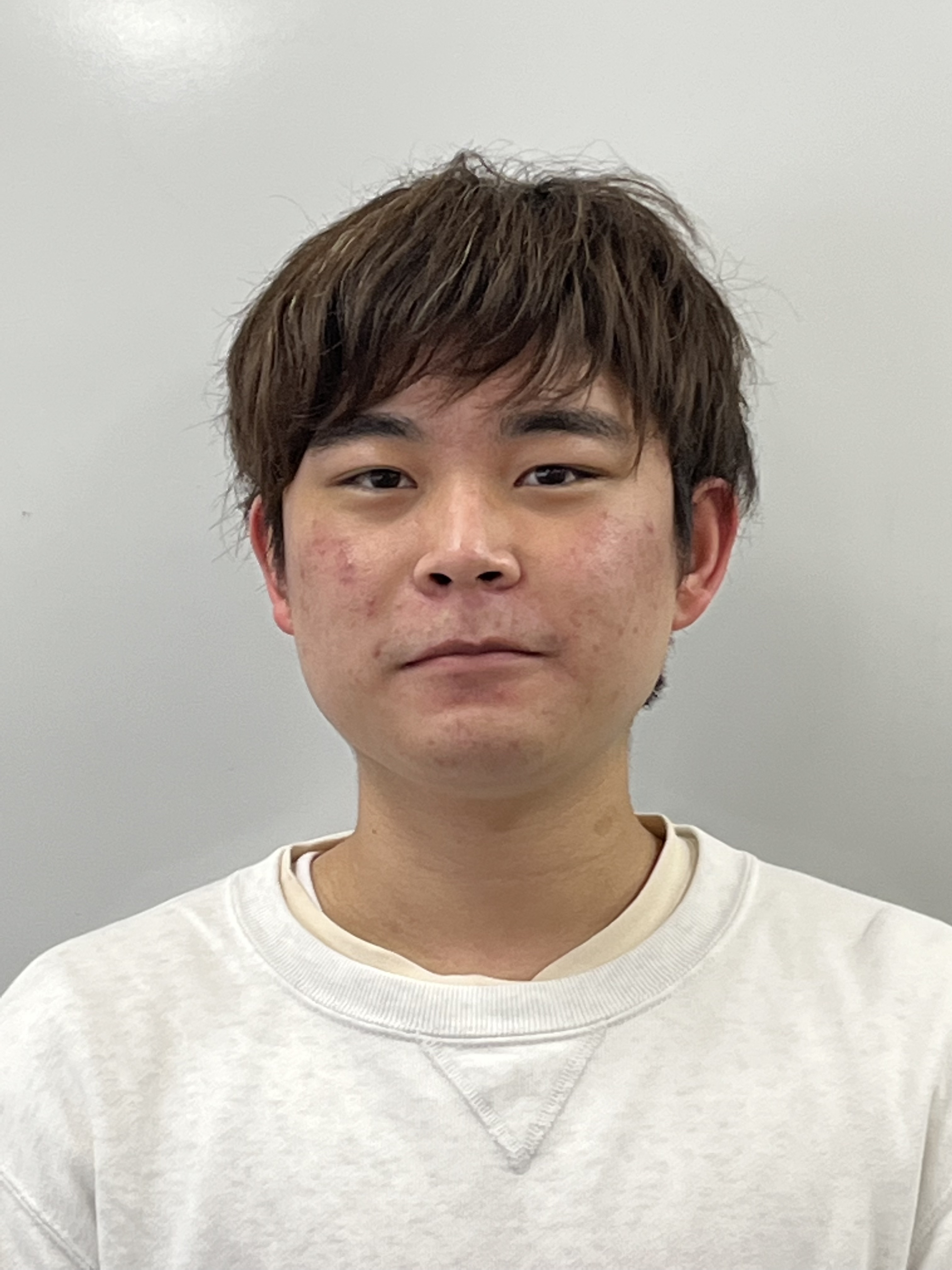}}]{Tatsuki Tokumura}
  received B.E. degree from Nagoya Institute of Technology, Japan, in 2024.
  He is currently a Master's student at Graduate School of Engineering, Nagoya Institute of Technology.
  His research interests include signal processing for wireless communications and machine learning.
\end{IEEEbiography}
\begin{IEEEbiography}[{\includegraphics[width=1in,height=1.25in,clip,keepaspectratio]{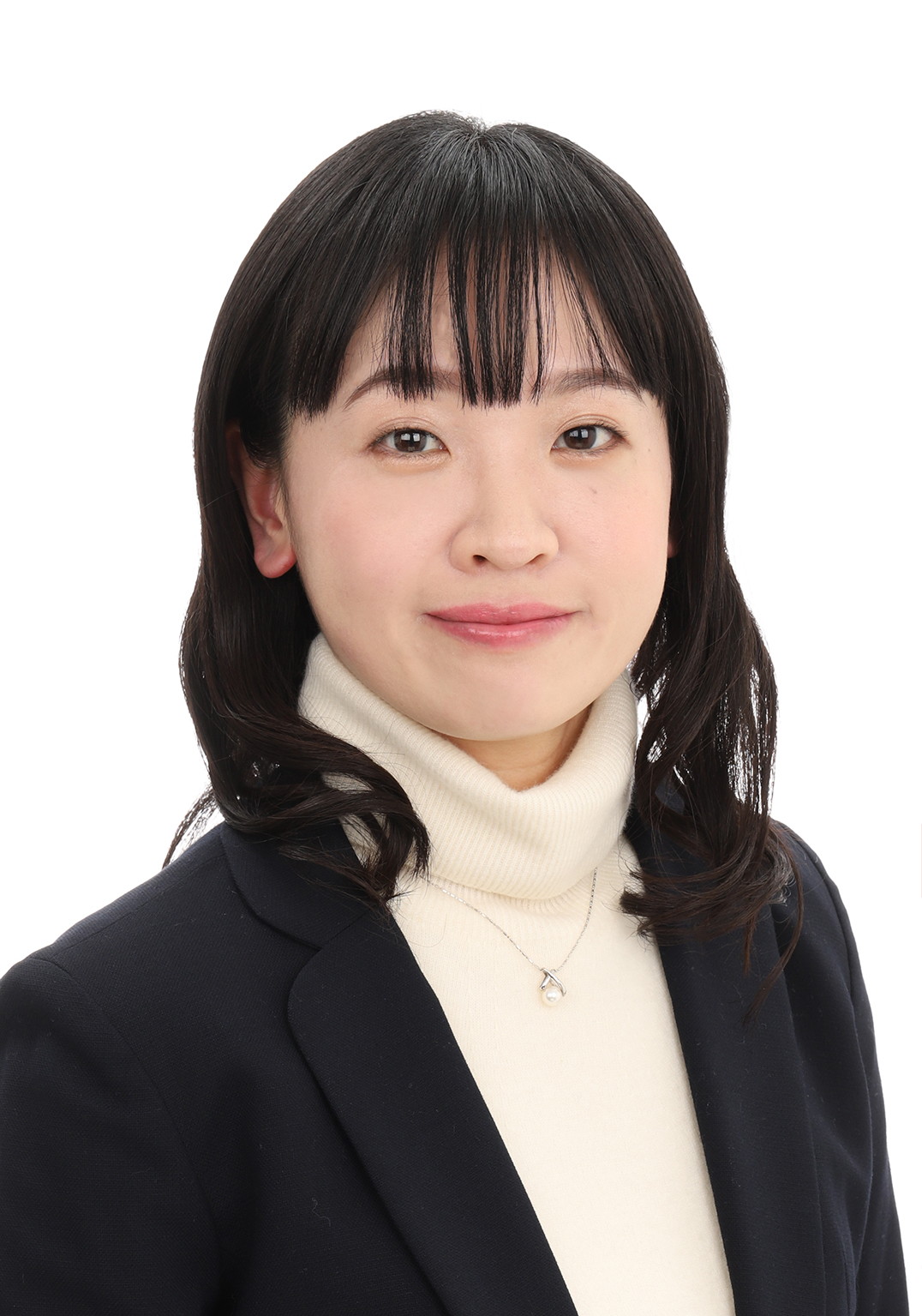}}]{Ayano Nakai-Kasai}
    received the bachelor’s degree 
    in engineering, the master’s degree in informatics,
    and Ph.D. degree in informatics from Kyoto University, Kyoto, Japan, in 2016, 2018, and 2021,
    respectively. She is currently an Assistant Professor
    at Graduate School of Engineering, Nagoya Institute of Technology. Her research interests include
    signal processing, wireless communication, and machine learning. She received the Young Researchers’
    Award from the Institute of Electronics, Information 
    and Communication Engineers in 2018 and APSIPA 
    ASC 2019 Best Special Session Paper Nomination Award. She is a member
    of IEEE and IEICE.
\end{IEEEbiography}
\begin{IEEEbiography}[{\includegraphics[width=1in,height=1.25in,clip,keepaspectratio]{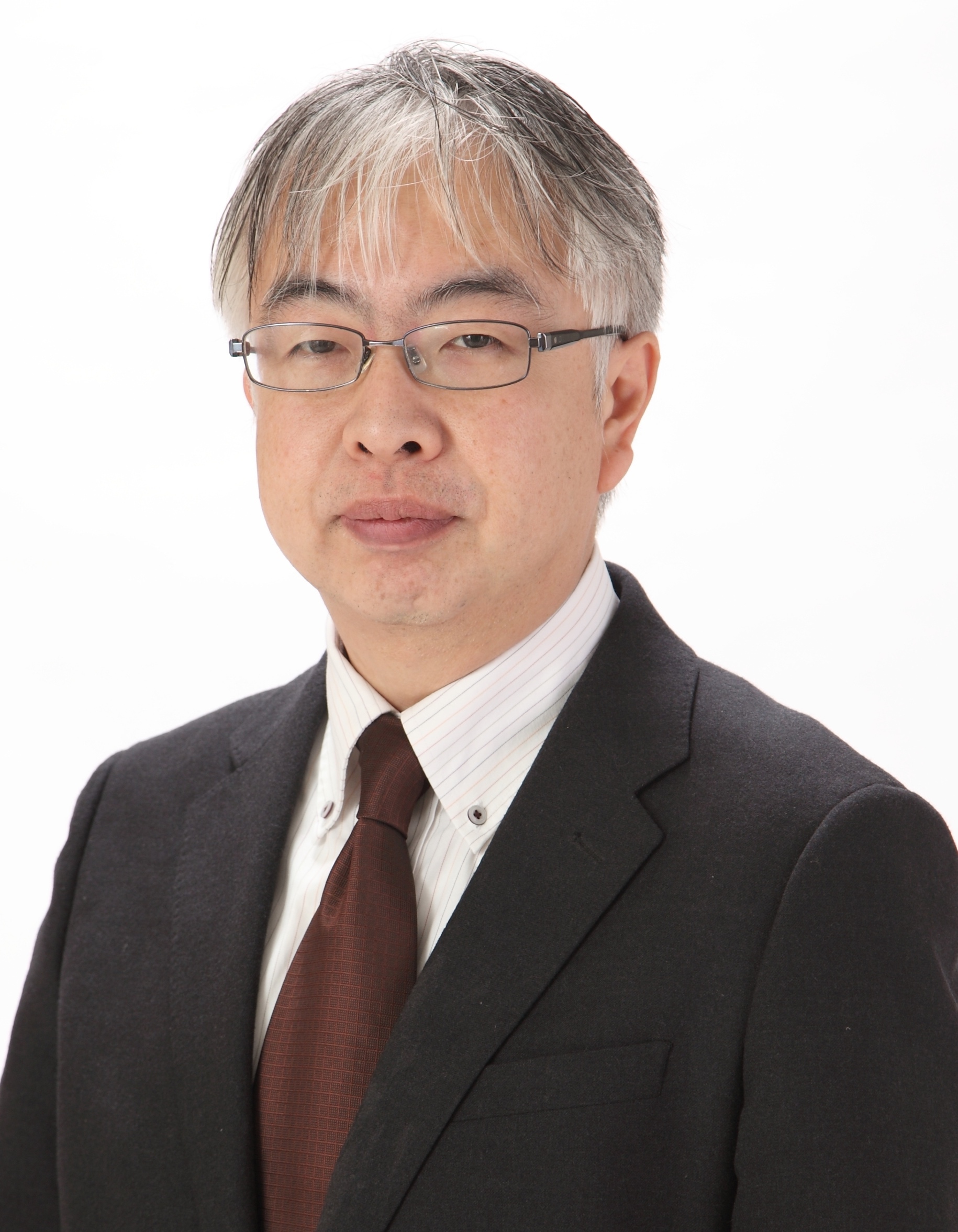}}]{Tadashi Wadayama}
    (M'96) was born in Kyoto, Japan, on May 9,1968.
    He received the B.E., the M.E., and the D.E. degrees from Kyoto Institute of Technology in 1991, 1993 and 1997, respectively.
    On 1995, he started to work with Faculty of Computer Science and System Engineering, Okayama Prefectural University as a research associate.
    From April 1999 to March 2000, he stayed in Institute of Experimental Mathematics, Essen University (Germany) as a visiting researcher.
    On 2004, he moved to Nagoya Institute of Technology as an associate professor. Since 2010, he has been a full professor of Nagoya Institute of Technology.
    His research interests are in coding theory, information theory, and signal processing for wireless communications.
    He is a member of IEEE and a senior member of IEICE.
\end{IEEEbiography}

\vfill

\end{document}